\newcommand\vldbdoi{XX.XX/XXX.XX}
\newcommand\vldbpages{XXX-XXX}
\newcommand\vldbvolume{14}
\newcommand\vldbissue{1}
\newcommand\vldbyear{2020}
\newcommand\vldbauthors{\authors}
\newcommand\vldbtitle{\shorttitle} 
\newcommand\vldbavailabilityurl{URL_TO_YOUR_ARTIFACTS}
\newcommand\vldbpagestyle{plain}
\pgfplotsset{compat=newest}
\pgfplotsset{compat=newest}                         
\pgfplotsset{plot coordinates/math parser=false}
\newtheorem{theorem}{Theorem}
\newtheorem{lemma}{Lemma}
\newtheorem{proposition}{Proposition}
\newtheorem{example}{Example}
\newtheorem{remark}{Remark}
\newcommand*\dif{\mathop{}\!\mathrm{d}}
\newcommand{\D}{\mathcal{D}}
\newcommand{\SA}{\mathcal{S}}
\newcommand{\X}{\mathcal{X}}
\newcommand{\N}{\mathcal{V}}
\newcommand{\Real}{\mathds{R}}
\newcommand{\SetPair}{\mathds{S}}
\newcommand{\E}{\mathds{E}}
\newcommand{\Set}[1]{\{#1\}}
\newcommand{\supp}{\text{supp}}
\begin{document}
\title{Multi-user Pufferfish Privacy}

\author{Ni Ding}
\affiliation{%
  \institution{University of Auckland}
  \city{Auckland}
  \state{New Zealand}
}
\email{dingni529@gmail.com}

\author{Songpei Lu}
\affiliation{%
  \institution{Sophgo Technology Co., Ltd.}
  \city{Shenzhen}
  \country{China}
}
\email{songpei.lu@sophgo.com}


\author{Wenjing Yang}
\affiliation{%
  \institution{Beijing Institute of Technology}
  \city{Beijing}
  \country{China}
}
\email{wenjinyang@bit.edu.cn}

\author{Zijian Zhang}
\affiliation{%
  \institution{Beijing Institute of Technology}
	\city{Beijing}
	\country{China}
}
\email{zhangzijian@bit.edu.cn}


\begin{abstract}
This paper studies how to achieve individual indistinguishability by pufferfish privacy in aggregated query to a multi-user system. It is assumed that each user reports realization of a random variable. We study how to calibrate Laplace noise, added to the query answer, to attain pufferfish privacy when user changes his/her reported data value, leaves the system and is replaced by another use with different randomness. Sufficient conditions are derived for all scenarios for attaining statistical indistinguishability on four sets of secret pairs. They are derived using the existing Kantorovich method (Wasserstain metric of order $1$). These results can be applied to attain indistinguishability when a certain class of users is added or removed from a tabular data. It is revealed that attaining indifference in individual's data is conditioned on the statistics of this user only. For binary (Bernoulli distributed) random variables, the derived sufficient conditions can be further relaxed to reduce the noise and improve data utility. 
\end{abstract}

\maketitle

\pagestyle{\vldbpagestyle}
\begingroup\small\noindent\raggedright\textbf{PVLDB Reference Format:}\\
\vldbauthors. \vldbtitle. PVLDB, \vldbvolume(\vldbissue): \vldbpages, \vldbyear.\\
\href{https://doi.org/\vldbdoi}{doi:\vldbdoi}
\endgroup
\begingroup
\renewcommand\thefootnote{}\footnote{\noindent
This work is licensed under the Creative Commons BY-NC-ND 4.0 International License. Visit \url{https://creativecommons.org/licenses/by-nc-nd/4.0/} to view a copy of this license. For any use beyond those covered by this license, obtain permission by emailing \href{mailto:info@vldb.org}{info@vldb.org}. Copyright is held by the owner/author(s). Publication rights licensed to the VLDB Endowment. \\
\raggedright Proceedings of the VLDB Endowment, Vol. \vldbvolume, No. \vldbissue\ %
ISSN 2150-8097. \\
\href{https://doi.org/\vldbdoi}{doi:\vldbdoi} \\
}\addtocounter{footnote}{-1}\endgroup

\ifdefempty{\vldbavailabilityurl}{}{
\vspace{.3cm}
\begingroup\small\noindent\raggedright\textbf{PVLDB Artifact Availability:}\\
The source code, data, and/or other artifacts have been made available at \url{\vldbavailabilityurl}.
\endgroup
}

\section{Introduction}

In an era we require data streams are in nearly every interaction, safeguarding personal data is foundational to trust, compliance and ethical innovation. 
This rely on how personal privacy is defined or interpreted as to how to keep anonymous and prevent re-identification attacks. 
Differential privacy is proposed in a rigorous mathematical framework based on the statistical indistinguishability \cite{CalibNoiseDP,Wasserman2010,RDP2017}, meaning the difference in probability distribution of outputs from analyses on neighboring datasets (differing in only one entry/row) is upper bounded by a similarity threshold, parameterized by a privacy budge $\epsilon$.  
Assuming one record corresponding to a single individual \cite{Dong2022GDP,Dwork2014book}, $\epsilon$-differential privacy guarantees that the statistical resolution discriminating individuals' contribution and/or existence in a dataset is controlled below a threshold. This is a rigorous data protection against a strong adversary that is able to infer the knowledge of individual data by collecting probabilistic features of an observable output.

While the framework of differential privacy is designed for malicious statistical inference, it is assumed that the source data (e.g., the original database, or query answer) is deterministic,\footnote{Or, the database could have some (prior) randomness. But, it was not considered in differential privacy \cite{Dwork2014book}.}
i.e., each individual reports constantly the same value in a dataset. This might not be applicable to some cases where the participants' data are varying, e.g., local differential privacy \cite{LDP2013MiniMax}, sensitive surveys \cite{LensveltMulders2005}, private aggregation \cite{Gibbs2017} and federated analytics/learning \cite{Konevcny2016,Konevcny2016b}. 
To address this, \citet{Pufferfish2012KiferConf,Pufferfish2014Kifer} proposed a correlation-aware and adversary-aware privacy preserving framework called \emph{pufferfish privacy}, where the database is statistically conditioned on the secret/privacy, and the adversary obtains some prior belief. 
The purpose is to study a more practical scenario for data privacy in the presence of intrinsic randomness and to combat a more advanced statistical inference capability of the adversary. This framework not only extends differential privacy to more application-aware settings, but also establishes the connection to information leakage, a Bayesian style data privacy definition. 

For public data that statistically depends on some nesting secret, publishing it allows posterior information gain on secret, or privacy leakage, at the adversary side. This is captured by another field in computer science called quantitative information flow \cite{Smith2009,Clarkson2009,Alvim2012}, which has been frequently cited in recent works in privacy leakage \cite{Cuff2016,PvsInfer2012,Makhdoumi2014PF}. 
The idea is to measure privacy leakage as the difference between posterior and prior believes (on secret) \cite{Ding2021ISIT,MaxL2020,Liao2019_AlphaLeak}. 
As pointed out in \cite{Nuradha2023,PufferfishWasserstein2017Song,Ding2022AISTATS,Ding2024TIFS}, in the lens of optimization, it can be viewed as the mechanism design for attaining pufferfish privacy. 
This is why pufferfish privacy has been applied to attribute privacy~\cite{Zhang2020Attr,Zhang2022}, a typical data protection scenario when releasing tabular data by masking the sensitive columns. See also the experiments in \cite{Ding2022AISTATS,Ding2024TIFS}. 

While pufferfish privacy framework is generalized enough to be tailored to many data protection applications, privacy is still highly relevant to \emph{anonymization} or the conceal of personal ID, the fundamental motivation of both differential and pufferfish privacy \cite{Dwork2014book,Pufferfish2012KiferConf}. 
This is also based on the worst case concern: limiting a single person’s influence automatically restricts what can be learned about that person’s attributes. 
In fact, the proposal of pufferfish privacy in \cite{Pufferfish2012KiferConf} is based on the indistinguishability of modifying and removing an individual's record in a dataset, assuming the participants' data are probabilistic. 
While most studies focuses on mechanism design, e.g., \cite{RPP2024,PufferfishWasserstein2017Song,Ding2022AISTATS}, and theoretical analysis, e.g., \cite{Nuradha2023}, few discusses how to calibrate noise to prevent membership inference under the pufferfish privacy framework.
For example, \citet{PufferfishWasserstein2017Song,Ding2022AISTATS} discovered a Wasserstein metric approach to set up the Laplace mechanism. However, \citet{PufferfishWasserstein2017Song} focused on a time series secret that is Markovian, and \citet{Ding2022AISTATS} looked into tabular data release similar to attribute privacy. Question remains how to apply the stat-of-art noise calibration techniques to the original individual indistinguishability problem in \cite{Pufferfish2012KiferConf}.

This paper considers a multi-user system, where each user reporting a random variable to compute an aggregated query. We study the problem of how to achieve individual indistinguishability in terms of pufferfish privacy in the query answer. This is done by regarding statistical difference upon a small change in individual data, either its value or its statistical features. We study the scenarios including (1) a user changes his/her value reported to the query answer; (2) a user leaves the system, i.e., user's presence/absence; (3) a user is replaced by another one with different randomness. 
We also assume an attendance probability indicating the chances that a user will be present in the system. This aligns with the motivating example~\cite[Section~3.2]{Pufferfish2012KiferConf}, for the purpose of enforcing indistinguishability for almost all intrinsic statistics.
We adopt Laplace mechanism, i.e., adding Laplace noise to query answer, calibrated by Kantorovich optimal transport plan (corresponding to Wasserstein metric of order $1$) proposed in \cite{Ding2022AISTATS} to derive sufficient conditions for attaining pufferfish privacy for the three discriminative scenarios above.

Our main contributions are the following. 
\begin{enumerate}[1.]
	\item Denoting secret as the specific value reported by the user, pufferfish privacy on distinct secret pairs guarantees a certain degree of indifference in individual's data value revealed by the query answer. Similar to the $\ell_1$-sensitivity method \cite{CalibNoiseDP}, we prove that setting the scale parameter of Laplace noise by the distance between distinct secrets attains pufferfish privacy. Realizing the absence of an individual is equivalent of reporting zero, the same sufficient condition can be applied for ensuring enough indifference between reporting a specific value and being absent. 
	\item Denoting secret as the belief/fact that user's data follows certain probability distribution, we consider two scenarios: when this user is joining or leaving the system; when this user is replaced by another user. Both cases will cause a change in the resulting probability distribution of the query answer. We show that for both scenarios, sufficient conditions can be derived in terms of the user's statistical features, e.g., expectation, cumulative mass function, for attaining pufferfish privacy. In the case of binary random variable, relaxed sufficient conditions can be obtained to reduce the Laplace noise and improve data utility. 
	Experimental results on three real-world datasets indicates that pufferfish privacy in the above two scenarios can be applied to protect data privacy against removing, adding or modifying a certain class of users in a dataset. 
\end{enumerate}
The results above are produced by considering four secret pairs in the pufferfish privacy frameworks. It is revealed that for attaining the indifference about a specific user's data, the achievability of pufferfish privacy is conditioned on this user only. While the attendance probability is incorporated in the multi-user system, the derived sufficient conditions are independent of it.

\paragraph{Organization} The rest of paper is organized as follows. In Section~\ref{sec:preliminary}, we review the existing Wasserstein metric based noise calibration method and focus on the Kantorovich approach that will be used to derive the main results in this paper. 
Section~\ref{sec:sys} describes the multi-user system, aggregated query function, intrinsic statistics and secret statements considered in this system. 
Section~\ref{sec:PP} presents our main results: the sufficient conditions for attaining pufferfish privacy on different secret pairs, where numerous experimental results are displayed.

\paragraph{Notation}

We use capital letters to denote random variables (r.v.s) and lower case letters to denote the elementary event. Calligraphic letters refer to the alphabet of r.v.s. 
For example, $x$ is an instance of r.v. $X$, that takes value in alphabet $\X$.
Denote $P_{X}(x) = \Pr(X = x)$ the probability of outcome $X=x$ when r.v. $X$ takes the value $x$. We may use the notation $P_{X}(x)$ and $\Pr(X=x)$ interchangeably without confusion in this paper.
We use $P_X(\cdot)  = (P_{X}(x) \colon x \in \X)$ to denote a probability distribution and $X \sim P_{X}(\cdot)$ means that r.v. $X$ follows distribution $P_{X}(\cdot)$. 
The support of $P_{X}(\cdot)$  is denoted by $\supp(P_{X}) = \Set{x \in \X \colon  P_X(x) > 0}$.
The expected value of $f(X)$ for some deterministic function $f$ w.r.t. probability $P_{X}$ is denoted by $\E_{X \sim P_{X}}[f(X)] = \sum_{x \in \X} P_{X}(x) f(x)$.
The conditional probability $P_{Y|X}(y|x)  = \Pr(Y=y|X=x)$ denotes the chances of having $Y=y$ given the outcome $X=x$. $P_{Y|X}(\cdot|x)$ refers to the probability distribution of $Y$ conditioned on $X = x$.

\section{Preliminary}
\label{sec:preliminary}

We review the pufferfish privacy framework that was proposed by \citet{Pufferfish2012KiferConf,Pufferfish2014Kifer} and the existing noise calibration methods by Wasserstein metric. We will focused on the computationally efficient $W_1$ or Kantorovich approach for the Laplace mechanism proposed in~\cite{Ding2022AISTATS}.
 This is the mechanism adopted in this paper. 

\subsection{Pufferfish privacy} 
A data curator wants to publish data $X$ for legitimate uses, e.g., to convey the statistics of income tax of a population. But, the concern is that it may reveal some personal information, e.g., individual's race, education background and marital status, that can be exploited for adversarial  action, discrimination, etc. 
We call this part of personal information the sensitive data and denote by $S$. In general, $S$ is statistically correlated to $X$ so that a capable adversary would gain statistical knowledge on the sensitive data. This kind of privacy breach should be prevented and mitigated by the data curator before data release. 

Let $P_{X|S}(\cdot | s,\rho)$ denote the conditional probability of the public data $X$ given some secret $S = s$. This statistical dependence is due to the intrinsic correlation of the dataset. 
Here, $\rho$ is introduced as the adversaries prior knowledge as to the system settings, incl. the hyperparameters for parametric probability distribution, etc.. For example, the mean and covariance if $P_{X|S}(\cdot | s,\rho)$ for all $s \in \SA$ is Gaussian. There could be more than one adversaries in the system, and each of them has a different prior beliefs. 

The reason that we focus on the probability of $X$ given $S$ is to observe the difference between the statistics in the public data $X$ excited by two distinct secrets. To protect privacy, we want the difference between two probabilities $P_{X|S}(\cdot|s_i,\rho)$ and $P_{X|S}(\cdot|s_j,\rho)$ to be small so that an adversary observes $X$ would have difficulty in distinguishing whether it is resulted from $S = s_i$ or $S = s_j$. That is, the more statistical indistinguishability we have on the secret $S$, the more data privacy is preserved. 
This is fundamental motivation for differential privacy proposed by~\citet{Dwork2006}. However, when the original data $X$ is an r.v., pufferfish privacy, a more generalized framework, should be adopted.

Assume that a randomized version of $X$, denoted by $Y$, is generated to preserve data privacy. The adversary is assumed to have access to the privatized data $Y$ only. He can collect the aggregated statistics by repeatedly querying the released $Y$. 
Let $\SetPair$ be a subset of secret pairs $(s_i,s_j)$. A data curator can define $\SetPair$ to specify all secret pairs where statistical indistinguishability should be enforced to preserve data privacy. 
For a given \emph{privacy budget} $\epsilon>0$, the privatized data $Y$ is said to be $(\epsilon,\SetPair)$-\emph{pufferfish privacy} if \cite{Pufferfish2014Kifer, Pufferfish2012KiferConf}
\begin{align}\label{eq:pufferfish_privacy}
	e^{-\epsilon} \leq \frac{P_{Y|S}(y|s_i,\rho)}{P_{Y|S}(y|s_j,\rho)} \leq e^{\epsilon}, \quad \forall y,  (s_i,s_j) \in \mathbb{S}, \rho.
\end{align} 
Note that \eqref{eq:pufferfish_privacy} also guarantees $\epsilon$-degree of indistinguishability against all adversaries prior belief $\rho$.

\paragraph{\bf Laplace noise} 
One of the common method to randomize data is to add noise. Denote $N$ the noise that is independent of $X$. We have the randomized data generated by  $Y= X + N$. 
In this paper, we consider (zero-mean) Laplace noise $N_\theta$ with probability distribution 
\begin{equation}\label{eq:LapProb}
	P_{N_\theta}(z) = \frac{1}{2\theta}e^{-\frac{|z|}{\theta}}, \quad \forall z \in \Real. 
\end{equation}
For $X$ being an r.v., the resulting probability of $Y$ is given by a convolution
\begin{equation} \label{eq:conv}
	P_{Y|S}(y|s) = \int P_{N_\theta} (y-x) P_{X|S}(x|s).
\end{equation} 
The scale parameter $\theta$ indicates the flatness of Laplace distribution and determines its variance $\mathrm{VAR}[N_{\theta}] =  2\theta^2$. Therefore, $\theta$ specifies how large the Laplace noise is. 
Calibrating noise is equivalent to determining the value of $\theta$. A data curator needs to choose a proper $\theta$ to balance data privacy and utility. That is, the scale parameter $\theta$ should be just large enough to attain $(\epsilon,\SetPair)$-pufferfish privacy. An unnecessarily large $\theta$ will deteriorate the data utility.

\paragraph{\bf Kantorovich Mechanism} 

Knowing the intrinsic randomness in the pufferfish privacy framework, the literature suggests a noise calibration method in terms of statistical distances, which is analogous to the $\ell_1$-sensitivity method for differential privacy~\cite{CalibNoiseDP}. This can be seen from the initial attempts by~\citet{PufferfishWasserstein2017Song}, where $\theta$ is chosen as the maximum $\infty$-order Wasserstein distance between all secret pairs. 
To address a computation problem due to the non-convexity of the $\infty$-Wasserstein metric~\cite{Champion2008InfW,DePascale2019InfW}, \citet{Ding2022AISTATS} proposed a $1$-order Wasserstein (Kantorovich) approach. This will be the main mechanism we adopt in this paper. We review as follows.

For each pair of prior distributions $P_{X|S}(\cdot|s_i,\rho)$ and $P_{X|S}(\cdot|s_j,\rho)$, let $\pi$ be a \emph{coupling} joint distribution such that $P_{X|S}(x|s_i,\rho) = \int \pi(x,x') \dif x'$ for all $x$ and $P_{X|S}(x'|s_j,\rho) = \int \pi(x,x') \dif x$ for all $x'$. 
The $1$-Wasserstein distance is $W_1(s_i,s_j) := \inf_{\pi}\int |x-x'| \dif \pi(x,x^{\prime})$. The minimizer $\pi^*$ is called Kantorovich optimal transport plan~\cite{Villani2009OPT,Santambrogio2015OPT}..
The idea of calibrating noise for attaining pufferfish privacy is to search a $\theta$ such that $P_{Y|S}(y|s_i,\rho) \leq e^\epsilon P_{Y|S}(y|s_j,\rho)$. To do so, we have
\begin{align} \label{eq:W1:SuffCond1Main}
	P_{Y|S}&(y|s_i,\rho) - e^\epsilon P_{Y|S}(y|s_j,\rho) \nonumber \\
	& =  \int \big( P_{N_{\theta}}(y-x) - e^{\epsilon} P_{N_{\theta}}(y-x') \big) \dif \pi^*(x,x')  \nonumber\\
	& \leq \int P_{N_{\theta}}(y-x') \big( e^{\frac{|x-x'|}{\theta}}  - e^{\epsilon} \big)  \dif \pi^*(x,x'), \qquad \forall y
\end{align}
where the last inequality is due to the triangular inequality when we add Laplace noise. See \cite[Section~3.3]{Ding2022AISTATS}. 
Here, choosing $\pi^*$ against other couplings is for the purpose of searching a smaller value of $\theta$ that is pufferfish privacy attaining. 
Requesting $e^{\frac{|x-x'|}{\theta}}  - e^{\epsilon} \leq 0$ for all $(x,x') \in \X^2$ in \eqref{eq:W1:SuffCond1Main}, a sufficient condition for attaining pufferfish privacy is obtained below. 
\begin{proposition}[$W_1$ (Kantorovich) mechanism~{\cite[Lemma~1]{Ding2022AISTATS}}] \label{prop:Kantorovich}
	Adding Laplace noise $N_\theta$ with 	
	\begin{equation}\label{eq:w_1_mechanism}
		\theta = \frac{1}{\epsilon}\max_{\rho, (s_i, s_j) \in \mathbb{S}} \sup_{(x,x') \in \supp(\pi^*)} |x-x'|
	\end{equation}
	attains $(\epsilon, \SetPair)$-pufferfish privacy in $Y$. 
\end{proposition}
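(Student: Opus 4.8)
The plan is to verify the two-sided bound \eqref{eq:pufferfish_privacy} directly for the prescribed $\theta$, reusing the chain of inequalities \eqref{eq:W1:SuffCond1Main} that is already assembled in the preliminary. First I would fix an arbitrary secret pair $(s_i,s_j)\in\SetPair$, a prior $\rho$, and an output $y$, and exploit the coupling property of $\pi^*$: since its two marginals are $P_{X|S}(\cdot|s_i,\rho)$ and $P_{X|S}(\cdot|s_j,\rho)$, the convolution \eqref{eq:conv} lets me express both conditionals against the \emph{same} joint law, namely $P_{Y|S}(y|s_i,\rho)=\int P_{N_\theta}(y-x)\dif\pi^*(x,x')$ and $P_{Y|S}(y|s_j,\rho)=\int P_{N_\theta}(y-x')\dif\pi^*(x,x')$. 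Subtracting $e^\epsilon$ times the second from the first produces exactly the single integral in the first line of \eqref{eq:W1:SuffCond1Main}.

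The analytic core is the pointwise bound on the Laplace density. From \eqref{eq:LapProb} together with the triangle inequality $|y-x|\ge|y-x'|-|x-x'|$, I obtain $P_{N_\theta}(y-x)\le P_{N_\theta}(y-x')\,e^{|x-x'|/\theta}$, so each integrand is dominated by $P_{N_\theta}(y-x')\big(e^{|x-x'|/\theta}-e^\epsilon\big)$, which is the final line of \eqref{eq:W1:SuffCond1Main}. I would then invoke the choice of $\theta$: by \eqref{eq:w_1_mechanism}, every $(x,x')\in\supp(\pi^*)$ satisfies $|x-x'|\le\epsilon\theta$, hence $e^{|x-x'|/\theta}-e^\epsilon\le0$ on the set over which $\pi^*$ is supported. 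Since $P_{N_\theta}\ge0$, the integrand is nonpositive $\pi^*$-almost everywhere, giving $P_{Y|S}(y|s_i,\rho)-e^\epsilon P_{Y|S}(y|s_j,\rho)\le0$, which is the right-hand inequality of \eqref{eq:pufferfish_privacy}.

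To secure the left-hand inequality $e^{-\epsilon}\le P_{Y|S}(y|s_i,\rho)/P_{Y|S}(y|s_j,\rho)$, I would repeat the argument with the roles of $s_i$ and $s_j$ exchanged. The transpose of $\pi^*(x,x')$ is a valid coupling for the reversed pair with identical support distances $|x-x'|$, so the supremum in \eqref{eq:w_1_mechanism}, and hence the admissible $\theta$, is unchanged; this yields $P_{Y|S}(y|s_j,\rho)\le e^\epsilon P_{Y|S}(y|s_i,\rho)$. Because $\theta$ is defined by a maximum over all $\rho$ and all $(s_i,s_j)\in\SetPair$, and every preceding step holds for every $y$, both bounds are uniform in $y$, $\rho$, and the pair, establishing $(\epsilon,\SetPair)$-pufferfish privacy.

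As for the main obstacle, the exponential Lipschitz bound and the sign of the bracket are essentially handed to us by \eqref{eq:W1:SuffCond1Main}, so the genuine care is in the bookkeeping rather than a new estimate. The subtle point is that driving the \emph{one-sided} integral below zero is not by itself the two-sided guarantee; one must justify the symmetrization, observing that $\pi^*$ for $(s_i,s_j)$ and its transpose for $(s_j,s_i)$ carry the same cost $|x-x'|$, so a single $\theta$ controls both directions. A secondary point to state cleanly is the passage from ``integrand $\le0$ on $\supp(\pi^*)$'' to ``integral $\le0$'', which is immediate once integration is restricted to $\supp(\pi^*)$ and the nonnegativity of $P_{N_\theta}$ is used.
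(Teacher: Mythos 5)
Your proposal is correct and follows essentially the same route as the paper, which establishes this proposition via the chain of inequalities in \eqref{eq:W1:SuffCond1Main}: expressing both conditionals against the common coupling $\pi^*$, applying the triangle inequality to the Laplace density, and forcing $e^{|x-x'|/\theta}-e^{\epsilon}\le 0$ on $\supp(\pi^*)$. Your added care on the two points the paper leaves implicit --- restricting the sign condition to $\supp(\pi^*)$ rather than all of $\X^2$, and symmetrizing via the transposed coupling to obtain the left-hand inequality of \eqref{eq:pufferfish_privacy} --- is sound and does not change the argument.
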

It is shown that the $W_1$ mechanism is equivalent to the $W_\infty$ mechanism proposed in \cite{PufferfishWasserstein2017Song}, but does not have computational problems. 
In fact, computing the Kantorovich optimal transport plan $\pi^*$ is much easier, which can be done directly using the exiting knowledge of $P_{X|S}(\cdot|s_i,\rho)$ and $P_{X|S}(\cdot|s_j,\rho)$: let $F_{X|S}(\cdot|s_i,\rho)$ and $F_{X|S}(\cdot|s_j,\rho)$ are the cumulative density functions for $P_{X|S}(\cdot|s_i,\rho)$ and $P_{X|S}(\cdot|s_j,\rho)$, respectively, 
\begin{equation} \label{eq:piStar}
	 \pi^*(x,x') = \frac{\dif^2}{\dif x  \dif x'} \min \big\{ F_{X|S}(x|s_i,\rho), F_{X|S}(x'|s_j,\rho) \big\}. 
\end{equation}
In fact, one does not need to computer $\pi^*$ to apply the Kantorovich mechanism \eqref{eq:w_1_mechanism}. It suffice to find the nonzero mass points $(x,x')$ to obtain $\sup_{(x,x') \in \supp(\pi^*)} |x-x'|$. We will propose such a method in Appendix~\ref{app1}, which will be applied to \autoref{sec:SuffCondSPQ} in this paper.

\section{Multi-user System}
\label{sec:sys}

Assume the database used for legitimate enquiery is contributed by a number of users (or user groups). We adopt a multi-user system setting in \cite[section~3.2]{Pufferfish2012KiferConf}, which is used to establish privacy with no assumptions, the situation when puffferfish privacy is equivalent to differential privacy. 

Denote $\N$ the finite set with each index $i \in \N$ denoting a unique user or participant. Let each user throw a dice $D_i$ independently and denote the outcome by a multiple random variable $D = (D_i \colon i \in \N)$ with alphabet $\D = \prod_{i \in \N} \D_i$. 
Here, $D_i$ may refer to a record of data constituted by different of fields/attributes of user $i$, and $D$ denotes a $|\N|$ size tabular database. Or, $D_i$ denotes scaler and $D$ is a $|\N|$-dimension random vector. 

Assume a data user keeps questioning the dataset $D$ and collect answers to gain knowledge on $D$. Let $f \colon \D \mapsto \Real$ be a query function and $X = f(D)$ be the query answer when it is applied to dataset $D$. 
For \emph{separable} $f$, we have 
\begin{align*}
	f(D) 
	&= \sum_{i \in V} f_i(D_i) 
\end{align*}
where we assume each user can choose its (possibly randomized) function $f_i$.\footnote{A randomized $f_i$ could denote the local data sanitation scheme chosen by individual $i$, as in local differential privacy \cite{LDP2013MiniMax}. However, instead of the design of $f_i$, we focus on the mechanism design after the data collection.}
If $D_i$ is an r.v., $f_i(D_i)$ is an r.v., too. This setting is equivalent to a summation query of $|V|$ r.v.s. Without loss of generality, we set $X = f(D)$ with $f$ being a deterministic summation function such that
$$ f(D) = \sum_{i \in \N} D_i $$
as  the choice of $f_i$ is not the focus of this paper. A data curator (server) only accepts what are reported by the users. 
Let $-i = \N \setminus \Set{i}$ denote the set containing all users except user $i$. We have $D = (D_i, D_{-i})$ where $D_{-i} = (D_j  \colon j \in V, j \neq i)$ such that 
$$ D_{-i} = f(D) - D_i. $$

For each user $i$, let $\zeta_i \in [0,1]$ be the presence probability such that $\Pr(\text{"user } i \text{ is present in } S \text{"})  = \zeta_i$ and $\Pr(\text{"user } i \text{ is absent in } S \text{"}) = 1-\zeta_i$. 
If user $i$ is present, the probability for him/her to report $D_i = a$ is denoted by $P_{i} (a) = \Pr (D_i = a)$ for all $a \in \D_i$. 
In this system, the database $D$ is an r.v. and the probability of $D$ is determined by the individual randomness (knowing $f$ is deterministic). 
For the prior belief $\rho$, it suffices to know the individual probabilities
$$ \rho = \Set{(\zeta_i, P_{i}(\cdot)) \colon i \in \N }$$
to determine the statistics of original $D$ and the query answer $X=f(D)$. 
For example, for $D = D_{-i}$ being the dataset instance where the $i$th user is missing but all other users $j$ report $a_j$, i.e., $D_{-i} = (a_j \colon j \in V, j \neq i )$.  
The probability of having this dataset is 
$$ \Pr(D) = (1-\zeta_i)  \cdot  \prod_{j\in\N \colon j \neq i} \zeta_j  P_{j} (a_j) $$
where there are only $|\N| - 1$ participants and we have 
$$
	P_X(x) 
	= \Pr( f(D_{-i}) = x) (1-\zeta_i) \prod_{j\in \N \colon j \neq i} \zeta_j.  
$$
Here, $f(D_{-i}) = \sum_{j \in \N \colon j \neq i} D_j$ is just a summation over $D_{-i}$.

\subsection{Secret}
Let the secret space $\SA$ contains all events where we want (or potentially want) to attain certain degree of statistical indistinguishability from others. 
Assume all other users remain in $D$, we consider different cases for user $i$ as follows. 
Let $s_{i,a}$ denote the event that user $i$ contribute a record/data $a$ to the dataset $D$ and $s_{i,\perp}$ denote the event that user $i$ is absent. We have the probabilities of two events being
\begin{align*}	
	& \Pr(S = s_{a_i}) = \zeta_i  P_{i}(a),  \\
	& \Pr(S = s_{\perp_i}) = 1-\zeta_i. 
\end{align*}
See \autoref{tab:ScrDesc}.
With the prior knowledge $\rho$, we can work out the probability of the query answer $X=f(D)$ conditioned on the secret $S$ as follows. 
\begin{align}
	P_{X|S}(x | s_{a_i}) 
	&= \Pr(f(D) = x | D_i = a_i) \cdot \prod_{j \in \N \colon j \neq i} \zeta_j \nonumber  \\
	&= \Pr(f(D_{-i}) = x - a_i) \cdot \prod_{j \in \N \colon j \neq i} \zeta_j , \label{eq:P_sa}\\
	P_{X|S}(x | s_{\perp_i}) 
	&=  \Pr(f(D_{-i}) = x )  \cdot \prod_{j \in \N \colon j \neq i} \zeta_j  \label{eq:P_sperp}
\end{align}
Here, the cases $s_{i,a}$ and  $s_{i,\perp}$ have all users statistically identical except the $i$th user. The latter case $s_{i,\perp}$ will have only $|\N|-1$ users in the system.

For a user that obtains an r.v. $D_i$, it is possible for $D_i$ to follow different distributions. We define $s_{i, P_{i}}$ the case that user $i$ is present and $D_i \sim P_{i}(\cdot)$. In this case,
\begin{align}
	P_{X|S}(x | s_{P_i}) 
	&= \int \Pr(f(D_{-i}) = x-t) P_i(t) \dif t \cdot  \prod_{j \in \N \colon j \neq i} \zeta_j \label{eq:P_sDi}
\end{align}

\begin{table}[t]
	\caption{Description of Secret $S$}
	\label{tab:ScrDesc}
	\begin{tabular}{cl}
		\toprule
		\hline
		\textbf{Secret Instance} & \textbf{Description} \\
		\midrule
		$s_{a_i}$ & User $i$ is present in $D$ and report value $a$.\\
		$s_{\perp_i}$ & User $i$ is absent in $D$.\\
		$s_{P_i}$ & User $i$ is present in $D$ and report r.v. $D_i$ that \\
		&  follows probability distribution $P_i(\cdot)$.\\
		\bottomrule
	\end{tabular}
\end{table}

\section{Pufferfish Privacy on Different $\SetPair$}
\label{sec:PP}

To protect privacy through pufferfish privacy framework, we need to first specify between which secret pairs certain degree of statistical indistinguishability should be be enforce. 
We first consider the following two sets of secret pairs.
\begin{align*}
	 \SetPair_{a,b} &= \Set{(s_{a_i}, s_{b_i}) \colon i \in \N}, \\
	  \SetPair_{a,\perp} &= \Set{(s_{a_i}, s_{\perp_i}) \colon i \in \N}.
\end{align*}
They are directly borrowed from differential privacy motivated by guaranteeing indistinguishability in the actual value of each records~\cite{CalibNoiseDP} and the existence of individual participants~\cite{Dwork2006}, respectively, in a dataset. For  secret pair set being $\SetPair_{a,b} \cup \SetPair_{a,\perp}$, we have the same system setting as "privacy with no assumption"~\cite[Section~3.2]{Pufferfish2012KiferConf}, which forms the basis of no-free-lunch privacy~\cite{Kifer2011NoFreeLunch}, where the statistical indistinguishability is enforced for almost all possible intrinsic statistics and therefore, as pointed out in ~\cite{Kifer2011NoFreeLunch}, the data utility could be severely damaged.

In addition to instance value of participants' data, pufferfish privacy framework is more powerful to ensure indistinguishability between individual statistics. 
Let $P_i(\cdot)$ and $Q_i(\cdot)$ be two distinct probability distributions over the alphabet $\D_i$. 
Adding and removing user $i$ will resulting in different statistics in query answers, the probability mass as well as the alphabet/support, while all secret pairs in $\SetPair_{a,b}$ and $\SetPair_{a,\perp}$ would only mean-shift the probability distribution; 
If $P_i(\cdot)$ and $Q_i(\cdot)$ are dissimilar, the resulting query answer probability will be obviously distinguishably recognizable by an adversarial observer, the situation we want to prevent for privacy protection. 
We consider the following two secret pair sets:
\begin{align*}
	\SetPair_{P,\perp} &= \Set{(s_{P_i}, s_{\perp_i}) \colon i \in \N }, \\
	\SetPair_{P,Q} &= \Set{(s_{P_i}, s_{Q_i}) \colon i \in \N}, 
\end{align*}
where $(s_{P_i}, s_{\perp_i})$ denotes the existence and nonexistence of user $i$ obtaining data following distribution $P_i(\cdot)$ and $\SetPair_{P,Q}$ denotes resolution whether user $i$ is emitting probability $P_i(\cdot)$ or $Q_i(\cdot)$. 

In this section, we first derive sufficient conditions for attaining pufferfish privacy in secret pair sets $\SetPair_{a,b}$ and $\SetPair_{a,\perp}$ and then $\SetPair_{P,\perp}$ and $\SetPair_{P,Q}$. 
These conditions are dependent of the $i$th user itself, not the rest users $-i$, in the system. 
We study the multi-user model in Section~\ref{sec:sys}, which incorporates individual user's presence probability $\zeta_i$. It will be shown that all sufficient conditions derived in this section do not depends on $\zeta_i$. 
We consider Laplace additive noise $Y = X + N_{\theta} =  f(D) + N_{\theta} $, where $P_{Y|S}(y|s)$ is determined by the convolution \eqref{eq:conv}.

\subsection{$(\epsilon, \SetPair_{a,b})$-Pufferfish Privacy}
\label{sec:Sab}

To guarantee indistinguishability between two distinct values that a user is reporting, it suffices to calibrate noise to the difference between these two values. 

\begin{theorem} \label{theo:SuffCondSab}
	For all $i \in \N$, given $a_i,b_i \in \D_i$ such that $a_i \neq b_i$, adding Laplace noise $N_\theta$ with 
	\begin{equation} \label{eq:SuffCondSab}
		\theta = \frac{\max_{i \in \N, \rho} | a_i - b_i |}{\epsilon}. 
	\end{equation}
	attains $(\epsilon, \SetPair_{a,b})$-pufferfish privacy in $Y$. 
\end{theorem}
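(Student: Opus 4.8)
The plan is to invoke the Kantorovich mechanism of Proposition~\ref{prop:Kantorovich}, for which it suffices to identify the optimal transport plan $\pi^*$ between $P_{X|S}(\cdot|s_{a_i},\rho)$ and $P_{X|S}(\cdot|s_{b_i},\rho)$ and to compute $\sup_{(x,x')\in\supp(\pi^*)}|x-x'|$. The key structural observation, read off directly from \eqref{eq:P_sa}, is that these two conditional distributions are \emph{translates} of one another: both are obtained from the common law of $f(D_{-i})$, scaled by the shared factor $\prod_{j\neq i}\zeta_j$, by shifting its argument by $a_i$ and by $b_i$ respectively, so that $P_{X|S}(x|s_{b_i},\rho)=P_{X|S}(x-(b_i-a_i)|s_{a_i},\rho)$. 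Since the factor $\prod_{j\neq i}\zeta_j$ and the full statistics of the remaining users $D_{-i}$ enter identically in both, this is exactly why the resulting condition will depend on user $i$ alone and will be independent of $\zeta_i$.

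First I would push this shift structure through to the cumulative distribution functions, obtaining $F_{X|S}(x|s_{b_i},\rho)=F_{X|S}(x-(b_i-a_i)|s_{a_i},\rho)$. Substituting into the closed form \eqref{eq:piStar} for the monotone coupling, the quantile-matching condition $F_{X|S}(x'|s_{b_i},\rho)=F_{X|S}(x|s_{a_i},\rho)$ collapses onto the graph of the rigid translation $x'=x+(b_i-a_i)$, so that every $(x,x')\in\supp(\pi^*)$ satisfies $|x-x'|=|b_i-a_i|=|a_i-b_i|$. Hence $\sup_{(x,x')\in\supp(\pi^*)}|x-x'|=|a_i-b_i|$, and inserting this into \eqref{eq:w_1_mechanism} and maximizing over $i\in\N$ and $\rho$ returns exactly the claimed scale \eqref{eq:SuffCondSab}, which by Proposition~\ref{prop:Kantorovich} attains $(\epsilon,\SetPair_{a,b})$-pufferfish privacy.

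As an alternative that does not route through the transport plan, I could verify \eqref{eq:pufferfish_privacy} directly. Writing $P_{Y|S}(y|s_{a_i},\rho)$ via the convolution \eqref{eq:conv} and changing variables to center the integral on the shared law of $f(D_{-i})$, the ratio of the two Laplace kernels at matched integration points is $\exp\big((|y-b_i-u|-|y-a_i-u|)/\theta\big)$, which the reverse triangle inequality bounds by $\exp(|a_i-b_i|/\theta)$ uniformly in $y$ and $u$. Choosing $\theta=|a_i-b_i|/\epsilon$ makes this bound equal to $e^{\epsilon}$; integrating against the common law preserves the inequality, and the symmetric choice (swapping the roles of $a_i$ and $b_i$) recovers the lower bound $e^{-\epsilon}$.

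The only real subtlety is the middle step: justifying that for two distributions differing by a pure translation the Kantorovich coupling is the rigid shift, so that the supremum of $|x-x'|$ over its support equals the gap $|a_i-b_i|$ rather than something larger. This is the one-dimensional analogue of the $\ell_1$-sensitivity being the displacement of the mean, but it must be argued from \eqref{eq:piStar} (or, equivalently, from the direct convolution bound above) rather than merely asserted; the direct verification is the cleanest way to make it airtight while sidestepping any measure-theoretic care about $\supp(\pi^*)$ when $f(D_{-i})$ is discrete.
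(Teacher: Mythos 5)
Your proposal is correct, and both of your routes in fact appear in the paper: the ``alternative'' direct verification via the convolution \eqref{eq:conv} and the reverse triangle inequality is precisely the paper's main proof of Theorem~\ref{theo:SuffCondSab}, while your primary route through the translate structure of $P_{X|S}(\cdot|s_{a_i},\rho)$ and $P_{X|S}(\cdot|s_{b_i},\rho)$ and the rigid-shift Kantorovich plan is exactly the derivation the paper gives in Appendix~\ref{app:ex} using Lemma~\ref{lemma:ComputePiStar} and Proposition~\ref{prop:ComputePiStar}. The one subtlety you flag --- justifying that $\sup_{(x,x')\in\supp(\pi^*)}|x-x'|=|a_i-b_i|$ for a pure translation rather than asserting it --- is handled in the paper by the sign analysis of $F_{X|S}(x|s_{a_i})-F_{X|S}(x\pm\triangle|s_{b_i})$, so your instinct to fall back on the direct convolution bound is a sound and equivalent way to close that gap.
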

\begin{proof}
		The proof is similar to \cite[Lemma~2(a)]{Ding2022AISTATS}, except we incorporate the presence probability $\zeta_i$. For $P_{X|S}(\cdot | s_{a_i})$ in~\eqref{eq:P_sa}, we have 
		\begin{align*}
			P_{Y|S}&(y|s_{a_i}) \\
			&= \int P_{N_\theta}(y-x) \Pr(f(D_{-i})=x-a_i) \prod_{j \in \N \colon j \neq i } \zeta_j \dif x \\
			&=  \int P_{N_\theta}( y-t-a_i ) \Pr(f(D_{-i})=t) \dif t \cdot \prod_{j \in \N \colon j \neq i } \zeta_j. 
		\end{align*}
		Then, 
		\begin{align}
			&P_{Y|S}(y|s_{a_i}) - e^{\epsilon} P_{Y|S}(y|s_{b_i})  \nonumber \\
			& = \int  \frac{1}{2\theta}  \Big(  e^{-\frac{|y-t-a_i|}{\theta}}  - e^{\epsilon-\frac{|y-t-b_i|}{\theta}} \Big) \Pr(f(D_{-i})=t) \dif t  \cdot \prod_{j \in \N \colon j \neq i } \zeta_j   \nonumber \\
			& = \int  \frac{1}{2\theta} e^{-\frac{|y-t-b_i|}{\theta}}  \cdot \Big(  e^{\frac{ |y-t-b_i| - |y-t-a_i|}{\theta}}  \nonumber \\
			&\qquad\qquad\qquad  - e^{\epsilon} \Big) \Pr(f(D_{-i})=t) \dif t \cdot \prod_{j \in \N \colon j \neq i } \zeta_j   \label{eq:TriInEq}\\
			& \leq   \int P_{N_{\theta}} (y-t-b_i)  \cdot \Big(  e^{\frac{ |a_i - b_i|}{\theta}}  - e^{\epsilon} \Big) \Pr(f(D_{-i})=t) \dif t \cdot \prod_{j \in \N \colon j \neq i } \zeta_j  \nonumber \\
			&= \Big(  e^{\frac{ |a_i -b_i|}{\theta}}  - e^{\epsilon} \Big) P_{Y|S}(y|s_{b_i}), \qquad \forall y,  \nonumber 
		\end{align}
		where \eqref{eq:TriInEq} is due to the triangular inequality. It is clear that $P_{Y|S}(y|s_{a_i}) \leq e^{\epsilon} P_{Y|S}(y|s_{b_i})$ for all $y$ if $\theta \geq \frac{| a_i - b_i |}{\epsilon}$. Maximize over all $i \in \N$ and prior belief $\rho$. Choose equality to get the minimal value of such $\theta$. We get \eqref{eq:SuffCondSab}.
\end{proof}

Theorem~\ref{theo:SuffCondSab} derives the same sufficient condition as \cite[Lemma~2(a)]{Ding2022AISTATS} where there is no presence probability, i.e., the deterministic states of existence of individual users. This means $(\epsilon,\SetPair_{a,b})$-pufferfish privacy does not depend on the value of $\zeta_i$.  

\begin{table}[h] 
	\caption{Statistics of Users 1, 2 and 3}
	\label{tab:ExpSetting}
	\begin{center}
		\begin{tabular}{ccccccc}
			\hline\hline
			& $X = 1$ & $X=2$ & $X=3$ & $X=4$ & $X=5$\\ \hline
			$\Pr(D_1 = \cdot)$ & 0.01  & 0.04 & 0.1 & 0.2 & 0.65 \\ \hline
			$\Pr(D_2 = \cdot)$ & 0.7  & 0.2 & 0.05 & 0.04 & 0.01 \\  \hline
			$\Pr(D_3 = \cdot)$ & 0.2  & 0.2 & 0.2 & 0.2 & 0.2 \\  \hline
		\end{tabular}
	\end{center}
\end{table}

Consider a $4$-user system with the first three users listed in \autoref{tab:ExpSetting}. 
Assuming the user $4$ data reporting changes from $D_i=5$ to $D_i=3$, we form a secret pair $(s_{a_4}, s_{b_4})$ with $a_4 = 5$ and $b_4 = 3$ for the $4$th position. 
By Theorem~\ref{theo:SuffCondSab}, $| a_4 - b_4 | = 2$ and so setting scale parameter tp $\theta = \frac{2}{\epsilon}$ attains pufferfish privacy. This can be seen from the optimal transport plan $\pi^*$ in Figure~\ref{fig:Sab}, where $\sup_{(x,x') \in \supp(\pi^*)}=2$ and by Proposition~\ref{prop:Kantorovich} we get $\theta = \frac{2}{\epsilon}$, too.

\begin{figure}[t] 
	\vspace*{-0.8cm}
	 \includegraphics[scale=0.6]{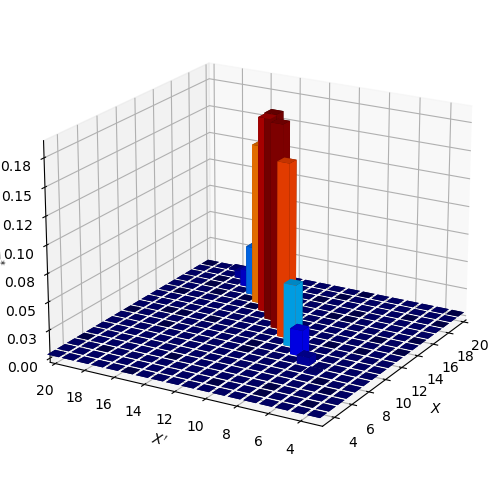}
	 \caption{Assume the first three users in Table~\ref{tab:ExpSetting} and consider $\SetPair_{a,b} = \Set{(s_{a_4}, s_{b_4})}$ for attaining distinguishability between user $4$ reporting $a_4 = 5$ and $b_4 = 3$. In this case, the Kantorovich optimal transport plan $\pi^*$ is shown in figure, where $\sup_{(x,x') \in \supp(\pi^*)}=2$. By Proposition~\ref{prop:Kantorovich}, adding Laplace noise with scale parameter $\theta = \frac{2}{\epsilon}$ attains $(\epsilon,\SetPair_{a,b})$-pufferfish privacy.}
	 \label{fig:Sab}
\end{figure}

\subsection{$(\epsilon, \SetPair_{a,\perp})$-Pufferfish Privacy}
\label{sec:Saperp}

Instead of the values each user reports, we consider the adversary's recognition of whether or not a user exists in the system by considering secret pair set $\SetPair_{a,\perp}$, where it is assumed user $i$ reports $a_i$ if she/he contributes to the database $D$. 
We show below that scale parameter of Laplace noise can be set by the value of $a_i$ for attaining pufferfish privacy.  

\begin{theorem} \label{theo:SuffCondSaperp}
	For all $a\in \D_i$, adding Laplace noise $N_\theta$ with 
	\begin{equation} \label{eq:SuffCondSaperp}
		\theta = \frac{1}{\epsilon} \max_{i \in \N, \rho} | a_i |. 
	\end{equation}
	attains $(\epsilon, \SetPair_{a,\perp})$-pufferfish privacy in $Y$. 
\end{theorem}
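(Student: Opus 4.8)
The plan is to reduce this statement to Theorem~\ref{theo:SuffCondSab} by recognizing that the absence of user $i$ is statistically indistinguishable from user $i$ reporting the value $0$. First I would compare the two conditional query-answer distributions directly. From \eqref{eq:P_sperp}, the secret $s_{\perp_i}$ yields $P_{X|S}(x|s_{\perp_i}) = \Pr(f(D_{-i})=x)\prod_{j\neq i}\zeta_j$, whereas substituting $b_i=0$ into \eqref{eq:P_sa} gives $P_{X|S}(x|s_{b_i}) = \Pr(f(D_{-i})=x-0)\prod_{j\neq i}\zeta_j = \Pr(f(D_{-i})=x)\prod_{j\neq i}\zeta_j$. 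These coincide exactly. Crucially, both carry the \emph{same} normalization factor $\prod_{j\neq i}\zeta_j$, since in each secret every other user $j\neq i$ is present while the status of user $i$ is fixed by the conditioning itself; the presence probability $\zeta_i$ never enters either expression. Convolving with Laplace noise through \eqref{eq:conv} then gives $P_{Y|S}(\cdot|s_{\perp_i}) = P_{Y|S}(\cdot|s_{b_i})$ with $b_i=0$.

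Having established this distributional identity, I would invoke Theorem~\ref{theo:SuffCondSab} applied to the secret pair $(s_{a_i}, s_{b_i})$ with $b_i=0$. That theorem guarantees the pufferfish-privacy inequality for the pair once $\theta \geq |a_i - 0|/\epsilon = |a_i|/\epsilon$. Because verifying \eqref{eq:pufferfish_privacy} for $(s_{a_i}, s_{\perp_i})$ is, by the identity above, exactly the same as verifying it for $(s_{a_i}, s_{b_i})$, the same bound on $\theta$ suffices. Finally I would maximize over all $i\in\N$ and all priors $\rho$ and choose equality to obtain the minimal admissible scale \eqref{eq:SuffCondSaperp}.

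If a self-contained derivation were preferred over the reduction, the alternative is to replay the inequality chain in the proof of Theorem~\ref{theo:SuffCondSab} with $|y-t-b_i|$ replaced by $|y-t|$ throughout, using the triangular inequality in the form $|y-t| - |y-t-a_i| \leq |a_i|$ to reach $P_{Y|S}(y|s_{a_i}) - e^\epsilon P_{Y|S}(y|s_{\perp_i}) \leq \big(e^{|a_i|/\theta} - e^\epsilon\big)P_{Y|S}(y|s_{\perp_i})$, which is nonpositive precisely when $\theta \geq |a_i|/\epsilon$. I do not expect a genuine obstacle here; the only point deserving care is confirming that $s_{a_i}$ and $s_{\perp_i}$ share the identical factor $\prod_{j\neq i}\zeta_j$, so that $\zeta_i$ drops out entirely and the sufficient condition is independent of the presence probability of user $i$, consistent with the remark following Theorem~\ref{theo:SuffCondSab}.
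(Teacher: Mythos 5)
Your proposal is correct, and it is essentially the paper's argument presented in the reverse order. The paper proves Theorem~\ref{theo:SuffCondSaperp} directly, by replaying the inequality chain of Theorem~\ref{theo:SuffCondSab} with $|y-t-b_i|$ replaced by $|y-t|$ and applying the triangular inequality in the form $|y-t|-|y-t-a_i|\leq|a_i|$ --- exactly your ``self-contained alternative'' --- and only afterwards records the observation that $s_{\perp_i}$ is distributionally identical to $s_{b_i}$ with $b_i=0$ as Remark~\ref{rem:SabSaperp}. You lead with that reduction and cite Theorem~\ref{theo:SuffCondSab}; this is clean and the distributional identity $P_{X|S}(\cdot|s_{\perp_i})=P_{X|S}(\cdot|s_{0_i})$, including the shared factor $\prod_{j\neq i}\zeta_j$, is verified correctly. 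The only pedantic caveat on the reduction route is that Theorem~\ref{theo:SuffCondSab} is stated for $b_i\in\D_i$, so if $0\notin\D_i$ you should note (as you implicitly do via the alternative derivation) that its proof never uses $b_i\in\D_i$. Incidentally, your version fixes a sign slip in the paper's displayed bound, which writes $e^{-|a_i|/\theta}-e^{\epsilon}$ where the triangular inequality actually yields $e^{|a_i|/\theta}-e^{\epsilon}$; your form is the one from which the condition $\theta\geq|a_i|/\epsilon$ genuinely follows.
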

\begin{proof}
	By the expression $P_{X|S}(x|s_{\perp_i}) $ in~\eqref{eq:P_sperp}, we have 
	\begin{align*}
		P_{Y|S}(y|s_{\perp_i}) = \int P_{N_\theta}(y-x) \Pr(D_{-i})=x) \dif x \cdot \prod_{j \in \N \colon j \neq i } \zeta_j  
	\end{align*}
	and
	\begin{align*}
		P&_{Y|S}(y|s_{a_i}) - e^{\epsilon} P_{Y|S}(y|s_{\perp_i})  \nonumber \\
		&= \int  \frac{1}{2\theta}  \Big(  e^{-\frac{|y-t-a_i|}{\theta}} - e^{\epsilon-\frac{|y-t|}{\theta}} \Big) \Pr(f(D_{-i})=t) \dif t  \cdot \prod_{j \in \N \colon j \neq i } \zeta_j  \nonumber \\
		&\leq \int  \frac{1}{2\theta}  e^{-\frac{|y-t|}{\theta}} \Big( e^{-\frac{|a_i|}{\theta}} - e^\epsilon \Big)  \Pr(f(D_{-i})=t) \dif t  \cdot \prod_{j \in \N \colon j \neq i } \zeta_j   \\
		&=  \Big( e^{-\frac{|a_i|}{\theta}} - e^\epsilon \Big)  P_{Y|S} (y | s_{\perp_i}), \quad \forall y. 
	\end{align*}
	Therefore, it suffices to have \eqref{eq:SuffCondSaperp} such that $e^{-\frac{|a_i|}{\theta}} - e^\epsilon$ and so $P_{Y|S}(y|s_{a_i}) \leq e^{\epsilon} P_{Y|S}(y|s_{\perp_i}) $ for all $y$ to attain pufferfish privacy. 
\end{proof}

\begin{remark} \label{rem:SabSaperp}
	It is not difficult to see from the proof of \autoref{theo:SuffCondSaperp} that $(\epsilon, \SetPair_{a,\perp})$-pufferfish privacy is equivalent to $(\epsilon, \SetPair_{a,b})$-pufferfish privacy for $b_i = 0$. That is, for separable query function $f$, missing user $i$ is equivalent to that he/she report no numeric value to the database $D$. 
\end{remark}

Continue with the system settings in Figure~\ref{fig:Sab}, instead of user $4$ reports $b_4=3$, we assume he/she exits the system and compare the resulting statistics to the case when he/she is present and reporting $a_4=5$. 
By Theorem~\ref{theo:SuffCondSaperp}, setting $\theta = \frac{5}{\epsilon}$ attains $\epsilon$-pufferfish privacy on $(s_{a_4}, s_{\perp_4})$. 
This is verified by Figure~\ref{fig:Saperp}, where $\sup_{(x,x') \in \supp(\pi^*)}=5$. 

In Appendix~\ref{app:ex}, we show the computations of the Kantorovich optimal transport plan $\pi^*$ for Figures~\ref{fig:Sab} and \ref{fig:Saperp}. Also note that the results derive in Subsections~\ref{sec:Sab} and \ref{sec:Saperp} (on secret pairs $(s_{a_i},s_{b_i})$ and $(s_{a_i},s_{\perp_i})$) are independent of the $i$th user's statistics. This is because the secrets considered in $\SetPair_{a,b}$ and $\SetPair_{a,\perp}$ include deterministic events only. 

\begin{figure}[t]
	\vspace*{-0.8cm}
	\includegraphics[scale=0.6]{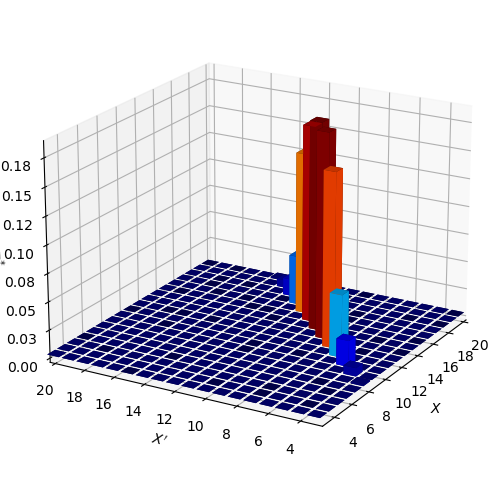}
	\caption{Assume the first three users in Table~\ref{tab:ExpSetting} and consider $\SetPair_{a,\perp} = \Set{(s_{a_4}, s_{\perp_4})}$ for attaining distinguishability between user $4$ existing and reporting $a_4 = 5$ and the absence of user $4$. The resulting Kantorovich optimal transport plan $\pi^*$ is shown in figure, where $\sup_{(x,x') \in \supp(\pi^*)}=5$. By Proposition~\ref{prop:Kantorovich}, adding Laplace noise with scale parameter $\theta = \frac{5}{\epsilon}$ attains $(\epsilon,\SetPair_{a,\perp})$-pufferfish privacy, which is equivalent to $(\epsilon,\SetPair_{a,0})$-pufferfish privacy (Remark~\ref{rem:SabSaperp}).}
	\label{fig:Saperp}
\end{figure}

\subsection{$(\epsilon, \SetPair_{P,\perp})$-Pufferfish Privacy}

We now pay attention to secrets that have impact on the resulting statistics in the query answer $f(D)$. We first study the probabilistic difference between the existence and non-existence of a user that obtains an r.v. $D_i$. Theorem below shows that pufferfish privacy attainability depends on the statistics of $D_i$.

\begin{theorem} \label{theo:SuffCondSPperp}
	Adding Laplace noise $N_\theta$ attains $(\epsilon, \SetPair_{P,\perp})$-pufferfish privacy in $Y$ if $\theta$ satisfies
	\begin{equation} \label{eq:SuffCondSPperpRelax}
			\E_{D_i \sim P_{i}(\cdot)} \big[e^{\frac{|D_i|}{\theta}} \big] = e^{\epsilon}
	\end{equation}
	for all $i \in \N$ and $\rho$, or
	\begin{equation} \label{eq:SuffCondSPperp}
		\theta = \max\limits_{t\in \D_i \colon i \in \N, \rho}\frac{|t|}{\epsilon}
	\end{equation}
\end{theorem}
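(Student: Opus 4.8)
The plan is to verify the two-sided bound in the pufferfish-privacy definition \eqref{eq:pufferfish_privacy} for every ordered pair $(s_{P_i}, s_{\perp_i})$, following the peeling argument used in \autoref{theo:SuffCondSab} and \autoref{theo:SuffCondSaperp} but now carrying the extra integration over $D_i \sim P_i(\cdot)$. First I would write both output densities through the convolution \eqref{eq:conv}. Using \eqref{eq:P_sDi} and \eqref{eq:P_sperp} together with the change of variable $u = x - t$, both densities reduce to integrals against $\Pr(f(D_{-i}) = u)$ times the common factor $\prod_{j \neq i} \zeta_j$: the kernel attached to $P_{Y|S}(y \mid s_{P_i})$ is $\int P_{N_\theta}(y-u-t) P_i(t)\,\dif t$, while the kernel attached to $P_{Y|S}(y \mid s_{\perp_i})$ is $P_{N_\theta}(y-u)$. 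The common factor $\prod_{j\neq i}\zeta_j$ cancels in every comparison, which is exactly why the final condition is independent of the presence probabilities.

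For the upper bound $P_{Y|S}(y \mid s_{P_i}) \le e^\epsilon P_{Y|S}(y \mid s_{\perp_i})$, I would factor $\frac{1}{2\theta}e^{-|y-u|/\theta}$ out of the bracketed difference and apply the triangle inequality $|y-u| - |y-u-t| \le |t|$ in the Laplace exponent, exactly as in \eqref{eq:TriInEq}. This turns the inner integral into $\E_{D_i \sim P_i}[e^{|D_i|/\theta}]$ and yields the pointwise estimate $P_{Y|S}(y\mid s_{P_i}) - e^\epsilon P_{Y|S}(y\mid s_{\perp_i}) \le \big(\E_{D_i\sim P_i}[e^{|D_i|/\theta}] - e^\epsilon\big) P_{Y|S}(y\mid s_{\perp_i})$. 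Requiring the prefactor to be nonpositive is precisely \eqref{eq:SuffCondSPperpRelax}, and since the left-hand factor is strictly decreasing in $\theta$, the equality pins down the smallest admissible scale.

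The subtler half is the lower bound $P_{Y|S}(y\mid s_{\perp_i}) \le e^\epsilon P_{Y|S}(y\mid s_{P_i})$, which the asymmetry of the pair forces me to check separately (unlike the symmetric $\SetPair_{a,b}$). Running the same factorization in the reverse order and invoking the reverse triangle inequality $|y-u| - |y-u-t| \ge -|t|$ gives $P_{Y|S}(y\mid s_{\perp_i}) - e^\epsilon P_{Y|S}(y\mid s_{P_i}) \le \big(1 - e^\epsilon\,\E_{D_i\sim P_i}[e^{-|D_i|/\theta}]\big) P_{Y|S}(y\mid s_{\perp_i})$, so this direction appears to demand the extra condition $\E_{D_i\sim P_i}[e^{-|D_i|/\theta}] \ge e^{-\epsilon}$. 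I expect this to be the main obstacle, since at first glance it is a second, independent constraint on $\theta$. The resolution is Jensen's inequality applied to the convex map $x \mapsto 1/x$ on $x>0$: $\E[e^{-|D_i|/\theta}] = \E[1/e^{|D_i|/\theta}] \ge 1/\E[e^{|D_i|/\theta}]$, and under \eqref{eq:SuffCondSPperpRelax} the right side equals $e^{-\epsilon}$. Hence the lower bound holds automatically, and \eqref{eq:SuffCondSPperpRelax} alone is binding.

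Finally, for the distribution-free condition \eqref{eq:SuffCondSPperp} I would note that $\theta = \max_{t\in\D_i,\,i,\rho} |t|/\epsilon$ forces $|t|/\theta \le \epsilon$, hence $e^{|t|/\theta} \le e^\epsilon$ for every $t \in \supp(P_i)$; taking the expectation gives $\E_{D_i\sim P_i}[e^{|D_i|/\theta}] \le e^\epsilon$, so \eqref{eq:SuffCondSPperp} implies \eqref{eq:SuffCondSPperpRelax} and is a more conservative, prior-agnostic (hence larger-$\theta$) choice. Maximizing both conditions over all $i\in\N$ and $\rho$ then delivers the claimed global scale parameter.
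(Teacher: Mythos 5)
Your proposal is correct, and its first half coincides with the paper's own proof: you write both conditional densities through the convolution, factor out the Laplace kernel $P_{N_\theta}(y-u)$, apply the triangle inequality $|y-u|-|y-u-t|\le|t|$ exactly as in \eqref{eq:TriInEq}, and arrive at the prefactor $\E_{D_i\sim P_i}[e^{|D_i|/\theta}]-e^\epsilon$, whose nonpositivity is \eqref{eq:SuffCondSPperpRelax}; the passage to \eqref{eq:SuffCondSPperp} via $\E[e^{|D_i|/\theta}]\le\max_t e^{|t|/\theta}$ is also identical. Where you genuinely diverge is the second half: the paper only verifies $P_{Y|S}(y|s_{P_i})\le e^\epsilon P_{Y|S}(y|s_{\perp_i})$ and stops there, whereas you observe that the pair $(s_{P_i},s_{\perp_i})$ is asymmetric, so the two-sided requirement in \eqref{eq:pufferfish_privacy} forces a separate check of $P_{Y|S}(y|s_{\perp_i})\le e^\epsilon P_{Y|S}(y|s_{P_i})$. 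Your resolution --- the reverse triangle inequality yields the prefactor $1-e^\epsilon\,\E[e^{-|D_i|/\theta}]$, and Jensen's inequality applied to $x\mapsto 1/x$ gives $\E[e^{-|D_i|/\theta}]\ge 1/\E[e^{|D_i|/\theta}]\ge e^{-\epsilon}$ under \eqref{eq:SuffCondSPperpRelax} --- is correct and shows the reverse direction comes for free, so no second constraint on $\theta$ is needed. This makes your argument strictly more complete than the paper's: you have closed a step the published proof leaves implicit, at the cost of only a few extra lines.
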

\begin{proof}
	By  \eqref{eq:P_sDi}, we have 
	\begin{align*}
		&P_{Y|S}(y|s_{P_{i}}) \\
		& = \int  \Big( \int P_{N_\theta}(y-m-t) \Pr(f(D_{-i})=m) \dif m \Big) P_{i}(t) \dif t  \prod_{j \in \N \colon j \neq i} \zeta_j
	\end{align*}
	and then
	\begin{align*}
		&P_{Y|S}(y|s_{i,P_{i}}) - e^{\epsilon} P_{Y|S}(y|s_{\perp_i})  \\
		& =  \int  \Big( \int \big( P_{N_\theta}(y-m-t)  - e^{\epsilon} P_{N_\theta}(y-m) \big) \cdot \\
		& \qquad\qquad\qquad  \Pr(f(D_{-i})=m) \dif m \Big) P_{i}(t) \dif t  \prod_{j \in \N \colon j \neq i}  \\
		&= \int \Big(  \int \frac{1}{2\theta} \big( e^{-\frac{|y-m-t|}{\theta}} - e^{\epsilon-\frac{|y-m|}{\theta}}\big)  \cdot \\
		& \qquad\qquad\qquad  \Pr(f(D_{-i})=m) \dif m \Big) P_{i}(t) \dif t \prod_{j \in \N \colon j \neq i} \\
		&\leq \int  \big( e^{\frac{|t|}{\theta}} - e^\epsilon \big) \int P_{N_\theta}(y-m) \cdot \\
		& \qquad\qquad\qquad  \Pr(f(D_{-i})=m) \dif m \cdot P_{i}(t) \dif t \prod_{j \in \N \colon j \neq i}  \\
		&= P_{Y|S}(y|s_{i,\perp}) \int \big( e^{\frac{|t|}{\theta}} - e^\epsilon \big) P_{D_i}(t) \dif t, \quad \forall y
	\end{align*}
	So, $\int e^{\frac{|t|}{\theta}} P_{i}(t) \dif t =	\E_{D_i \sim P_{i}(\cdot)} \big[e^{\frac{|D_i|}{\theta}} \big] \leq e^{\epsilon }$ is a sufficient condition to hold $P_{Y|S}(y|s_{P_{i}}) \leq e^{\epsilon} P_{Y|S}(y|s_{\perp_i})$ for all $y$. Therefore, we have condition~\eqref{eq:SuffCondSPperpRelax} for attaining $(\epsilon, \SetPair_{P,\perp})$-pufferfish privacy. 
	Knowing that $\E_{D_i \sim P_{i}(\cdot)} \big[e^{\frac{|D_i|}{\theta}} \big] \leq \max_{t \in \D_i} e^{\frac{|t|}{\theta}} $, it suffices to have 
	$$
		\max_{t \in \D_i} e^{\frac{|t|}{\theta}} \leq e^{\epsilon} \quad \Longrightarrow \quad 
		\theta \geq \max_{t \in \D_i} \frac{|t|}{\epsilon }
	$$
	in order to hold~\eqref{eq:SuffCondSPperpRelax}. Take the maximization further over all $i \in \N$ and $\rho$, and reduce to equation to get the minimum such $\theta$, we get~\eqref{eq:SuffCondSPperp}. 
\end{proof}

\begin{remark} \label{rem:SuffCondSPperp}
	The proof of \autoref{theo:SuffCondSPperp} indicates that sufficient condition \eqref{eq:SuffCondSPperpRelax} is more relaxed than \eqref{eq:SuffCondSPperp}, which will produce a smaller value of $\theta$. 
\end{remark}

\begin{figure}[t] 
	\vspace*{-0.8cm}
	\includegraphics[scale=0.6]{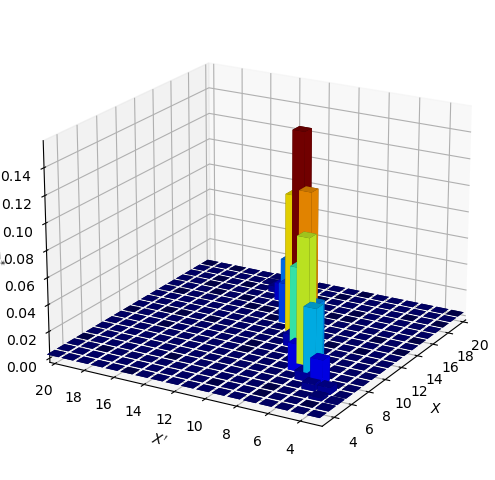}
	\caption{For the first three users in Table~\ref{tab:ExpSetting}, consider adding a $4$th user such that $D_4 \sim P_4(\cdot)$ in Table~\ref{tab:ExpSettingExtraTwo}. For secret pair $\SetPair_{P,\perp} = \Set{(s_{P_4}, s_{\perp_4})}$, the resulting Kantorovich optimal transport plan $\pi^*$ is shown in figure, where $\sup_{(x,x') \in \supp(\pi^*)}=5$. By Proposition~\ref{prop:Kantorovich}, adding Laplace noise with scale parameter $\theta = \frac{5}{\epsilon}$ attains $(\epsilon,\SetPair_{a,\perp})$-pufferfish privacy.}
	\label{fig:SPperp}
\end{figure}

\begin{table}[t]
	\caption{Two choices of Users 4}
	\label{tab:ExpSettingExtraTwo}
	\begin{center}
		\begin{tabular}{ccccccc}
			\hline\hline
			& $X = 1$ & $X=2$ & $X=3$ & $X=4$ & $X=5$\\ \hline
			$P_4(\cdot)$ & 0.4  & 0.1 & 0 & 0.1 & 0.4 \\ \hline
			$Q_4(\cdot)$ & 0  & 0.05 & 0.9 & 0.05 & 0 \\  \hline
		\end{tabular}
	\end{center}
\end{table}

For the 3-user system in Table~\ref{tab:ExpSetting}, we add the $4$th user with $D_4 \sim P_4(\cdot)$ in Table~\ref{tab:ExpSettingExtraTwo}. It is clear that $\max_{t\in \D_4} |t| = 5$ and therefore by \eqref{eq:SuffCondSPperp} in Theorem~\ref{theo:SuffCondSPperp}, adding Laplace noise with $\theta = \frac{|5}{\epsilon}$ attains pufferfish privacy. This result coincide with the one following the Kantorovich approach in Proposition~\ref{prop:Kantorovich}. See Figure~\ref{fig:SPperp}.

\begin{figure}[t]
	\centerline{\scalebox{0.6}{
\begin{tikzpicture}
	
	\definecolor{darkgray176}{RGB}{176,176,176}
	\definecolor{darkorange25512714}{RGB}{255,127,14}
	\definecolor{steelblue31119180}{RGB}{31,119,180}
	
	\begin{axis}[
		width=4in,
		height=2.5in,
		scale only axis,
		xmin=0.1,
		xmax=1,
		xlabel={\Large privacy budget $\epsilon$},
		ymin=0,
		ymax=50,
		ylabel={\Large $\theta$},
		grid=major,
		legend style={at={(0.98,0.93)},draw=darkgray!60!black,fill=white,legend cell align=left}
		]

		\addplot [
		color=blue,
		line width = 1.5pt]
		table {%
			0.1 50
			0.118367346938776 42.2413793103448
			0.136734693877551 36.5671641791045
			0.155102040816327 32.2368421052632
			0.173469387755102 28.8235294117647
			0.191836734693878 26.063829787234
			0.210204081632653 23.7864077669903
			0.228571428571429 21.875
			0.246938775510204 20.2479338842975
			0.26530612244898 18.8461538461538
			0.283673469387755 17.6258992805755
			0.302040816326531 16.5540540540541
			0.320408163265306 15.6050955414013
			0.338775510204082 14.7590361445783
			0.357142857142857 14
			0.375510204081633 13.3152173913043
			0.393877551020408 12.6943005181347
			0.412244897959184 12.1287128712871
			0.430612244897959 11.6113744075829
			0.448979591836735 11.1363636363636
			0.46734693877551 10.6986899563319
			0.485714285714286 10.2941176470588
			0.504081632653061 9.91902834008097
			0.522448979591837 9.5703125
			0.540816326530612 9.24528301886792
			0.559183673469388 8.94160583941606
			0.577551020408163 8.65724381625442
			0.595918367346939 8.39041095890411
			0.614285714285714 8.13953488372093
			0.63265306122449 7.90322580645161
			0.651020408163265 7.68025078369906
			0.669387755102041 7.46951219512195
			0.687755102040816 7.2700296735905
			0.706122448979592 7.08092485549133
			0.724489795918367 6.90140845070423
			0.742857142857143 6.73076923076923
			0.761224489795918 6.56836461126005
			0.779591836734694 6.41361256544503
			0.797959183673469 6.26598465473146
			0.816326530612245 6.125
			0.83469387755102 5.99022004889976
			0.853061224489796 5.86124401913876
			0.871428571428571 5.73770491803279
			0.889795918367347 5.61926605504587
			0.908163265306122 5.50561797752809
			0.926530612244898 5.39647577092511
			0.944897959183673 5.29157667386609
			0.963265306122449 5.19067796610169
			0.981632653061225 5.09355509355509
			1 5
		};
		\addlegendentry{\Large $\theta$ by \eqref{eq:SuffCondSPperp} in Theorem~\ref{theo:SuffCondSPperp}};
		
		\addplot [
		color=red,
		dashed,
		line width = 1.5pt]
		table {%
			0.1 30.5560385049098
			0.118367346938776 25.8989317567273
			0.136734693877551 22.4924742969577
			0.155102040816327 19.8923587801101
			0.173469387755102 17.8424551460408
			0.191836734693878 16.1847257453729
			0.210204081632653 14.8163694284809
			0.228571428571429 13.6676283060223
			0.246938775510204 12.6894989305344
			0.26530612244898 11.8465480642685
			0.283673469387755 11.1125194438068
			0.302040816326531 10.4675434820174
			0.320408163265306 9.89630669044798
			0.338775510204082 9.38681659090417
			0.357142857142857 8.92954774010077
			0.375510204081633 8.51683837758175
			0.393877551020408 8.14245588723585
			0.412244897959184 7.80127842262237
			0.430612244897959 7.48905801212099
			0.448979591836735 7.20224181104611
			0.46734693877551 6.9378355039592
			0.485714285714286 6.69329769977258
			0.504081632653061 6.46645741459801
			0.522448979591837 6.25544896058928
			0.540816326530612 6.05866010274797
			0.559183673469388 5.87469043302826
			0.577551020408163 5.70231768721652
			0.595918367346939 5.54047029090411
			0.614285714285714 5.38820483078816
			0.63265306122449 5.24468745034572
			0.651020408163265 5.10917839484713
			0.669387755102041 4.98101910080441
			0.687755102040816 4.85962135418761
			0.706122448979592 4.74445814072537
			0.724489795918367 4.63505588800364
			0.742857142857143 4.53098785847357
			0.761224489795918 4.43186849897855
			0.779591836734694 4.33734858904946
			0.797959183673469 4.24711105926551
			0.816326530612245 4.16086737414508
			0.83469387755102 4.07835439260904
			0.853061224489796 3.99933163403766
			0.871428571428571 3.92357889007877
			0.889795918367347 3.85089413224764
			0.908163265306122 3.78109167344188
			0.926530612244898 3.71400054813625
			0.944897959183673 3.64946308150152
			0.963265306122449 3.58733362223044
			0.981632653061225 3.52747741762885
			1 3.46976961268028
		};
		
		\addlegendentry{\Large $\theta$ by \eqref{eq:SuffCondSPperpRelax} in Theorem~\ref{theo:SuffCondSPperp} using Brent; method};
	\end{axis}
	
\end{tikzpicture}}}
	\caption{For the experiment in Figure~\ref{fig:SPperp}, we use Brent method to approximate a $\theta$ that satisfies $\E_{D_i \sim P_{D_i}} \big[e^{\frac{|D_i|}{\theta}} \big] \leq e^{\epsilon}$ by \eqref{eq:SuffCondSPperpRelax}. Compared to \eqref{eq:SuffCondSPperp}, the value of $\theta$ is reduced, which indicates a noise reduction, i.e., an improvement in data utility, for attaining pufferfish privacy. }
	\label{fig:Brent}
\end{figure}
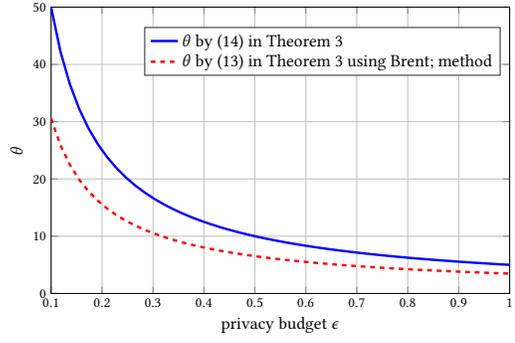

\subsubsection{Brent's method}
Note that the noise calibration method~\eqref{eq:SuffCondSPperpRelax} requires solving a polynomial equation. While in general case an exact solution is not feasible, one can apply numerical methods. Figure~\ref{fig:Brent} shows the $\theta$ value satisfying $\E_{D_i \sim P_{D_i}} \big[e^{\frac{|D_i|}{\theta}} \big] \leq e^{\epsilon}$ by~\eqref{eq:SuffCondSPperpRelax} approximated by the Brent's method \cite{Brent1971,Suli2003book}. It verifies Remark~\ref{rem:SuffCondSPperp} that \eqref{eq:SuffCondSPperpRelax} is a relaxed sufficient condition. It reduces the value of $\theta$ and improves data utility. 

An interesting observation is that an r.v. $D_i$ can be think of a number of records in a dataset that belong to a certain class, e.g., the teenagers, diabetic patients. Each record is considered a realization of $D_i$, and so the absence of $D_i$ refers to removing all records belonging to this class. 
That is, the pufferfish privacy is able to guarantee indistinguishability in changing single record, as well as a class of records generated by a subgroup of population.

To see this, consider three real-world datasets in the UCI machine learning repository~\cite{UCI2007}: \texttt{adult}, \texttt{bank marketing} and \texttt{student performance}. 
For dataset \texttt{adult}, we use the \texttt{education} field for all records such that \texttt{race}$=$`\texttt{white}' as $D_i$, i.e., $D_i  \sim P_i(\cdot) = \Pr(\texttt{education} = \cdot | \texttt{race} = \texttt{`white'})$. Consider when removing all records as realizations of $D_i$ (i.e., deleting all \texttt{white} people's data) from the dataset. The resulting query probability will be different and we apply \eqref{eq:SuffCondSPperp} and \eqref{eq:SuffCondSPperpRelax} by Brent's method in Theorem~\ref{theo:SuffCondSPperp}  to choose $\theta$ for attaining pufferfish privacy. We plot both $\theta$'s versus privacy budget $\epsilon$ in Figure~\ref{fig:SuffCondUCIPperp}(a), where the noise incurred by \eqref{eq:SuffCondSPperpRelax}  is much smaller than \eqref{eq:SuffCondSPperp}.  
We do the same experiment on \texttt{student performance} and \texttt{bank marketing} by letting $D_i  \sim P_i (\cdot) = \Pr(\texttt{romatic} = \cdot | \texttt{higher} = \texttt{`yes'})$ and $D_i  \sim P_i(\cdot) = \Pr(\texttt{marital} = \cdot | \texttt{loan} = \texttt{`yes'})$, respectively. We get the same results in Figures~\ref{fig:SuffCondUCIPperp}(b) and (c).

\begin{figure*}[t]
	\subfigure[\texttt{adult}]{\scalebox{0.55}{
\begin{tikzpicture}

\definecolor{darkgray176}{RGB}{176,176,176}
\definecolor{darkorange25512714}{RGB}{255,127,14}
\definecolor{steelblue31119180}{RGB}{31,119,180}

\begin{axis}[
width=3.5in,
height=2.5in,
scale only axis,
xmin=0.1,
xmax=1,
xlabel={\Large privacy budget $\epsilon$},
ymin=0,
ymax=161,
ylabel={\Large $\theta$},
grid=major,
legend style={at={(0.98,0.93)},draw=darkgray!60!black,fill=white,legend cell align=left}
]

\addplot [
color=blue,
line width = 1.5pt
]
table {%
0.1 160
0.2 80
0.3 53.3333333333333
0.4 40
0.5 32
0.6 26.6666666666667
0.7 22.8571428571429
0.8 20
0.9 17.7777777777778
1 16
};

\addplot [
color=red,
dashed,
line width = 1.5pt]
table {%
	0.1 46.6282013604038
	0.2 24.0038124742262
	0.3 16.458016985655
	0.4 12.681317723579
	0.5 10.4119029270012
	0.6 8.89589797438197
	0.7 7.81025935004597
	0.8 6.99349894850444
	0.9 6.35592622322652
	1 5.84374564968718
};
\end{axis}

\end{tikzpicture}}}
	\subfigure[\texttt{student performance}]{\scalebox{0.55}{
\begin{tikzpicture}

\definecolor{darkgray176}{RGB}{176,176,176}
\definecolor{darkorange25512714}{RGB}{255,127,14}
\definecolor{steelblue31119180}{RGB}{31,119,180}

\begin{axis}[
width=3.5in,
height=2.5in,
scale only axis,
xmin=0.1,
xmax=1,
xlabel={\Large privacy budget $\epsilon$},
ymin=0,
ymax=21,
ylabel={\Large $\theta$},
grid=major,
legend style={at={(0.98,0.93)},draw=darkgray!60!black,fill=white,legend cell align=left}
]
\addplot [
color=blue,
line width = 1.5pt
]
table {%
0.1 20
0.2 10
0.3 6.66666666666667
0.4 5
0.5 4
0.6 3.33333333333333
0.7 2.85714285714286
0.8 2.5
0.9 2.22222222222222
1 2
};

\addplot [
color=red,
dashed,
line width = 1.5pt]
table {%
	0.1 13.6016549298052
	0.2 6.84306843443145
	0.3 4.5901927683633
	0.4 3.46371874229823
	0.5 2.78778553376052
	0.6 2.33710710635313
	0.7 2.01513311265748
	0.8 1.77358910857529
	0.9 1.58565661565653
	1 1.43524513878258
};

\end{axis}

\end{tikzpicture}}}
	\subfigure[\texttt{bank marketing}]{\scalebox{0.55}{
\begin{tikzpicture}

\definecolor{darkgray176}{RGB}{176,176,176}
\definecolor{darkorange25512714}{RGB}{255,127,14}
\definecolor{steelblue31119180}{RGB}{31,119,180}

\begin{axis}[
width=3.5in,
height=2.5in,
scale only axis,
xmin=0.1,
xmax=1,
xlabel={\Large privacy budget $\epsilon$},
ymin=0,
ymax=31,
ylabel={\Large $\theta$},
grid=major,
legend style={at={(1.02,0.93)},draw=darkgray!60!black,fill=white,legend cell align=left}
]

\addplot [
color=blue,
line width = 1.5pt
]
table {%
0.1 30
0.2 15
0.3 10
0.4 7.5
0.5 6
0.6 5
0.7 4.28571428571429
0.8 3.75
0.9 3.33333333333333
1 3
};
\addlegendentry{\Large $\theta$ by \eqref{eq:SuffCondSPperp} in Theorem~\ref{theo:SuffCondSPperp}};

\addplot [
color=red,
dashed,
line width = 1.5pt]
table {%
	0.1 15.0093546626372
	0.2 7.59101451165105
	0.3 5.11870868042967
	0.4 3.88281778181175
	0.5 3.14142069372006
	0.6 2.64721324908633
	0.7 2.29421078723952
	0.8 2.02942369376946
	0.9 1.82341583498072
	1 1.65852786984989
};
\addlegendentry{\Large $\theta$ by \eqref{eq:SuffCondSPperpRelax} in Theorem~\ref{theo:SuffCondSPperp} using Brent; method};

\end{axis}

\end{tikzpicture}}}
	\caption{Two values of $\theta$ produced by \eqref{eq:SuffCondSPperpRelax} and \eqref{eq:SuffCondSPperp} in Theorem~\ref{theo:SuffCondSPperp} for achieving $\epsilon$-pufferfish privacy on $\SetPair_{P,\perp} = \Set{(P_i,\perp_i)}$ on three data sets in the UCI machine learning repository, where $P_i (\cdot) = \Pr(\texttt{education} = \cdot | \texttt{race} = \texttt{`white'})$, $P_i(\cdot)  = \Pr(\texttt{romatic} = \cdot | \texttt{higher} = \texttt{`yes'})$ and $P_i(\cdot)  = \Pr(\texttt{marital} = \cdot | \texttt{loan} = \texttt{`yes'})$ for dataset \texttt{adult}, \texttt{student performance} and \texttt{bank marketing}, respectively.  
	}
	\label{fig:SuffCondUCIPperp}
\end{figure*}

\subsubsection{Binary $D_i$}
%

We also show an example of Remark~\ref{rem:SuffCondSPperp} in a binary r.v. case, where the value of $\theta$ in~\eqref{eq:SuffCondSPperpRelax} can be worked out in a closed form. 
For Bernoulli distributed $D_i \sim \text{Bernoulli}(p_i)$ for all $i \in \N$ such that $\Pr(D_i = 1) = p_i$ and $\Pr(D_i = 0) = 1-p_i$, we have 
$\E_{D_i \sim P_{D_i}} \big[e^{\frac{|D_i|}{\theta}} \big] = p_i e^{\frac{1}{\theta}} + (1-p_i) = e^{\epsilon}$. This gives
\begin{equation}  \label{eq:SuffCondSPperpRelaxBinary}
	\theta \geq \frac{1}{\log \frac{e^{\epsilon} - (1-p_i)}{p_i}}
\end{equation}
However, we will get a sufficient condition
$
	\theta = \frac{1}{\epsilon}, 
$
a value larger than the RHS of \eqref{eq:SuffCondSPperpRelaxBinary}.  

\subsection{$(\epsilon, \SetPair_{P,Q})$-Pufferfish Privacy} \label{sec:SuffCondSPQ}

For two users emitting different r.v.s, include one (and excluding the other) will result in different probability distribution in the query answer $f(D)$. 
Theorem below shows that attaining pufferfish privacy on $\SetPair_{P,Q}$ depends on the statistics of these two user.

\begin{theorem} \label{theo:SuffCondSPQ}
	Adding Laplace noise $N_\theta$ with
	\begin{equation}
		 \theta = \frac{1}{\epsilon} \sup_{t \in \D_i \colon i \in \N, \rho } \triangle^*(t)
	\end{equation}
	attains $(\epsilon, \SetPair_{P,Q})$-pufferfish privacy in $Y$, where 
	\begin{align} \label{eq:Tri_t}
		\triangle^*(&t)  = \nonumber \\
		&\begin{cases}
			0 &  F_{P_{i}}(t)  = F_{Q_{i}}(t)  \\
			\min \Set{\triangle>0 \colon  F_{P_i}(t)  \leq  F_{Q_i}(t+\triangle)} & F_{P_i}(t)  > F_{Q_i}(t) \\
			\min \Set{\triangle>0 \colon F_{P_i}(t) \geq  F_{Q_i}(t-\triangle)}  & F_{P_i}(t)  < F_{Q_i}(t)
		\end{cases}.
	\end{align}
	
\end{theorem}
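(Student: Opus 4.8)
The plan is to apply the Kantorovich estimate exactly as in \eqref{eq:W1:SuffCond1Main}, but with a coupling tailored to the convolutional structure of the two conditionals, so that the support displacement can be read off directly from $F_{P_i}$ and $F_{Q_i}$. First I would recall from \eqref{eq:P_sDi} that both $P_{X|S}(\cdot\mid s_{P_i})$ and $P_{X|S}(\cdot\mid s_{Q_i})$ are convolutions of the \emph{same} factor $\Pr(f(D_{-i})=\cdot)$ with $P_i(\cdot)$ and $Q_i(\cdot)$ respectively, each carrying the identical constant $\prod_{j\neq i}\zeta_j$. Since this constant and the common factor appear on both sides, they cancel in the ratio $P_{Y|S}(y\mid s_{P_i})/P_{Y|S}(y\mid s_{Q_i})$ and, crucially, play no role in the transport displacement. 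This already explains why the condition depends on user $i$ only and not on $\zeta_i$.

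The key step is to build a coupling $\pi$ of the two conditionals out of the one-dimensional comonotone (Kantorovich) coupling $\gamma$ of $P_i$ and $Q_i$, whose support is described by \eqref{eq:piStar} with $F_{P_i},F_{Q_i}$ in place of the conditional CDFs. Concretely I would draw $m$ from $\Pr(f(D_{-i})=\cdot)$ and $(t,t')$ from $\gamma$ independently and set $(x,x')=(m+t,\,m+t')$; a one-line check shows the marginals of $\pi$ reproduce the two conditionals, up to the shared constant $\prod_{j\neq i}\zeta_j$. Because the common shift $m$ cancels, $|x-x'|=|t-t'|$ for every point of $\supp(\pi)$, so $\sup_{(x,x')\in\supp(\pi)}|x-x'|=\sup_{(t,t')\in\supp(\gamma)}|t-t'|$. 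Feeding this $\pi$ into \eqref{eq:W1:SuffCond1Main}, which is valid for any coupling with the optimal $\pi^*$ merely giving the smallest admissible scale, yields $P_{Y|S}(y\mid s_{P_i})\le e^{\epsilon}P_{Y|S}(y\mid s_{Q_i})$ once $\theta\ge\tfrac1\epsilon\sup_{\supp(\gamma)}|t-t'|$; swapping the roles of $P_i$ and $Q_i$ gives the reverse ratio, so $(\epsilon,\SetPair_{P,Q})$-pufferfish privacy holds.

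It then remains to identify $\sup_{(t,t')\in\supp(\gamma)}|t-t'|$ with $\sup_t\triangle^*(t)$ of \eqref{eq:Tri_t}. Here I would read the comonotone plan through its quantile form: the mass of $P_i$ at $t$ occupies the quantile band $\big(F_{P_i}(t^-),F_{P_i}(t)\big]$ and is transported to the $Q_i$-points with the same quantiles, reaching $F_{Q_i}^{-1}\!\big(F_{P_i}(t)\big)$ at its far end. The three branches of $\triangle^*(t)$ are exactly the bookkeeping of this displacement: no movement when $F_{P_i}(t)=F_{Q_i}(t)$; a rightward shift to the smallest $\triangle$ with $F_{P_i}(t)\le F_{Q_i}(t+\triangle)$ when $P_i$ has accumulated more mass by $t$; and a symmetric leftward shift in the remaining case. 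Taking the supremum over $t\in\D_i$, then over $i\in\N$ and $\rho$, and setting equality to obtain the smallest admissible scale gives the stated $\theta$; the procedure of Appendix~\ref{app1} locates these extremal mass points without ever forming $\pi^*$.

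The main obstacle I anticipate is precisely this last identification, namely proving rigorously that the largest displacement in the comonotone coupling of $P_i$ and $Q_i$ is captured by $\triangle^*(t)$ with correct handling of atoms, i.e.\ the half-open quantile bands, the ties $F_{P_i}(t)=F_{Q_i}(t)$, and the left-versus-right direction. The convolution step, by contrast, is routine once the marginal check is done. A secondary point I would state carefully is that \eqref{eq:W1:SuffCond1Main} applies to the chosen (and possibly sub-optimal) $\pi$ and not only to the optimum, so that the displacement of this particular coupling already certifies the sufficient condition.
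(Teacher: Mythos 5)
Your proposal is correct, and it reaches the theorem by a route that differs from the paper's in its key step. The paper works entirely with the conditional query-answer distributions: it writes their CDFs as $F_{X|S}(x|s_{P_i},\rho)=\int\Pr(f(D_{-i})=x-t)F_{P_i}(t)\dif t$, bounds differences such as $F_{X|S}(x+\triangle|s_{P_i},\rho)-F_{X|S}(x|s_{Q_i},\rho)$ between the min and max over $t$ of $F_{P_i}(t+\triangle)-F_{Q_i}(t)$, and concludes via Lemma~\ref{lemma:ComputePiStar} that the displacement of the Kantorovich \emph{optimal} plan between the two convolved conditionals is dominated by $\sup_t\triangle^*(t)$ uniformly over $\Pr(f(D_{-i})=\cdot)$. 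You instead construct an explicit, possibly sub-optimal coupling by lifting the comonotone plan $\gamma$ of $(P_i,Q_i)$ through the common convolution factor, so that $|x-x'|=|t-t'|$ pointwise, and you invoke the fact that the estimate \eqref{eq:W1:SuffCond1Main} is valid for any coupling, not only $\pi^*$. Your route buys a more transparent justification of the step the paper passes over somewhat quickly (the claim that \eqref{eq:Tri_t} is ``tuned to the maximum over all $\Pr(f(D_{-i})=\cdot)$''), and it makes structurally evident why the condition depends on user $i$ alone; the paper's route buys uniformity with the rest of its results, since everything is phrased through Proposition~\ref{prop:ComputePiStar} and the optimal plan. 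The one piece you flag as delicate --- identifying $\sup_{(t,t')\in\supp(\gamma)}|t-t'|$ with $\sup_t\triangle^*(t)$, including the handling of atoms and ties --- is exactly the content of Lemma~\ref{lemma:ComputePiStar} applied to the pair $(P_i,Q_i)$ rather than to the conditionals, so you may simply cite that lemma instead of re-deriving the quantile-band bookkeeping.
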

\begin{proof}		
		Since
		\begin{align*}
			P_{Y|S}&(y|s_{P_{i}},\rho) - e^{\epsilon} P_{Y|S}(y|s_{Q_{i}},\rho)  \\
			&=\Big(  \int P_{N_\theta}(y-x) \Pr \big( f(D) = x | D_i \sim P_{i}(\cdot) \big) \dif x - \\
			&\qquad e^{\epsilon} \int P_{N_\theta}(y-x') \Pr \big( f(D) = x | D_i \sim Q_{i}(\cdot) \big) \dif x' \Big) \prod_{j\in\N \colon j \neq i} \zeta_i, 
		\end{align*}
		the problem is equivalent to attaining pufferfish privacy for two prior distributions $ \Pr \big( f(D) = x | D_i \sim P_{i}(\cdot) \big)  $ and $\Pr \big( f(D) = x | D_i \sim Q_{i}(\cdot) \big)$, regardless of the presence probability $\prod_{j\in\N} \zeta_i$. 
		We apply the Kantorovich method to work out a sufficient condition below. 
		
		Define cumulative density functions of two priors using an expression different from~\eqref{eq:P_sDi} without $\zeta_i$ for all $i \in \N$ by 
		\begin{align*}
			F_{X|S}(x|s_{P_i},\rho) &=  \int_{-\infty}^{x}  \int \Pr(f(D_{-i}) = t) P_i(m-t) \dif t  \dif m \\
			&= \int  \Pr(f(D_{-i}) = t)  F_{P_i}(x-t) \dif t   \\
			&= \int \Pr(f(D_{-i}) = x-t)  F_{P_i}(t) \dif t,  \\
			F_{X|S}(x'|s_{Q_i},\rho) &= \Pr(f(D_{-i}) = x'-t)  F_{Q_i}(t) \dif t  , 
		\end{align*}		
		where $ F_{P_i}(t)$ and $ F_{Q_i}(t)$ are the cumulative distribution of $P_{i}(\cdot)$ and $ Q_{i}(\cdot)$, respectively.
		
		We apply Lemma~\ref{lemma:ComputePiStar} here. 
		As $F_{X|S}(x|s_{P_i},\rho)  - F_{X|S}(x|s_{Q_i},\rho) \leq \max_{t \in \D_i} \Set{F_{P_i}(t)  - F_{Q_i}(t) }$, we have $F_{X|S}(x|s_{P_i},\rho)  < F_{X|S}(x|s_{Q_i},\rho) $ if $\max_{t \in \D_i} \Set{F_{P_i}(t)  - F_{Q_i}(t) } < 0 $. Likewise,  $F_{X|S}(x|s_{P_{i}},\rho)  > F_{X|S}(x|s_{Q_i},\rho) $ if $\min_{t \in \D_i} \Set{F_{P_i}(t)  - F_{Q_i}(t) } > 0 $. 
		For given $\triangle$, 
		\begin{align*}
			F_{X|S}(x+\triangle|&s_{P_i},\rho)  - F_{X|S}(x|s_{Q_i},\rho) \\
			& = \int \Pr(f(D_{-i}) = x-t)  \Big(  F_{P_i}(t+\triangle) -  F_{Q_i}(t) \Big) \dif t \\
			&\begin{cases}
				\leq \max_{t \in \D_i} \Set{  F_{P_i}(t+\triangle) -  F_{Q_i}(t) } \\
				\geq \min_{t \in \D_i} \Set{  F_{P_i}(t+\triangle) -  F_{Q_i}(t) } 
			\end{cases}
		\end{align*}
		for all $\triangle > 0$.  Therefore,  \eqref{eq:Tri_t} is tuned to the maximum value over all possible probability distribution $ \Pr(f(D_{-i}) = \cdot) $. Therefore, \eqref{theo:SuffCondSPQ} is sufficient to attain pufferfish privacy in $\SetPair_{P,Q}$. 		
\end{proof}

Theorem~\ref{theo:SuffCondSPQ} and its proof state that the statistical indistinguishability between secrets $s_{P_i}$ and $s_{Q_i}$ depends on the statistics of user $i$ only. That is, the sufficient condition in Theorem~\ref{theo:SuffCondSPQ} applies to any given system containing $-i$, where the $i$the user is joining with different probability features. 

For the existing three users in Table~\ref{tab:ExpSetting}, we add user $4$ based on the two choices in Table~\ref{tab:ExpSettingExtraTwo}, either $D_4 \sim P_4(\cdot)$ or $D_4 \sim Q_4(\cdot)$, corresponds to secret pair $(s_{P_4},s_{Q_4})$. Applying Theorem~\ref{theo:SuffCondSPQ}, we have $\sup_{t \in \D_4 } \triangle^*(t) = 2$ and therefore setting $\theta = \frac{2}{\epsilon}$ attains $\epsilon$-pufferfish privacy on $(s_{P_4},s_{Q_4})$. Note, this calculation does not involve any information of the existing three users, i.e., the result remains $\theta = \frac{2}{\epsilon}$ regardless of how the probabilities in Table~\ref{tab:ExpSetting} changes. 
Alternatively, we can obtain the prior probabilities of query answer $P_{X|S}(\cdot|s_{P_4})$ and $P_{X|S}(\cdot|s_{Q_4})$ to work out the Kantorovich optimal transport plan $\pi^*$ in Figure~\ref{fig:SPQ}, where $\sup_{(x,x') \in \supp(\pi^*)}=2$ and by Proposition~\ref{prop:Kantorovich}, we still get $\theta = \frac{2}{\epsilon}$. But, the computational cost is higher than Theorem~\ref{theo:SuffCondSPQ}, as we need to plug in probabilities in Table~\ref{tab:ExpSetting} to calculate in $P_{X|S}(\cdot|s_{P_4})$ and $P_{X|S}(\cdot|s_{Q_4})$.

\begin{figure}[t]
	\vspace*{-0.8cm}
	\includegraphics[scale=0.6]{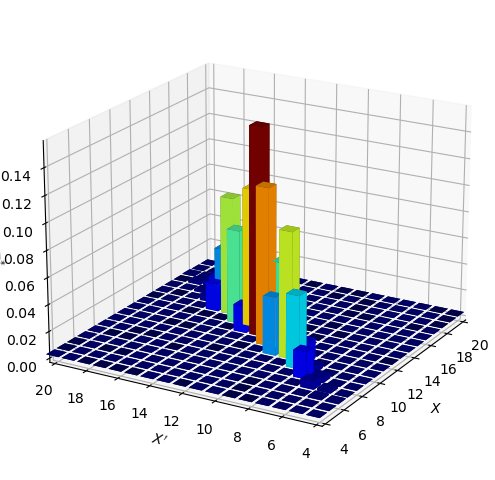}
	\caption{For the first three users in Table~\ref{tab:ExpSetting}, consider adding user $4$ with two choices $D_4 \sim P_4(\cdot)$ and $D_4 \sim Q_4(\cdot)$ in Table~\ref{tab:ExpSettingExtraTwo}. For the secret pair $\SetPair_{P,Q} = \Set{(s_{P_4}, s_{Q_4})}$, the resulting Kantorovich optimal transport plan $\pi^*$ is shown in figure, where $\sup_{(x,x') \in \supp(\pi^*)}=2$.}
	\label{fig:SPQ}
\end{figure}

\subsubsection{Binary $D_i$}

For binary r.v. $D_i$, a more straightforward results than Theorem~\ref{theo:SuffCondSPQ} can be obtained. 

\begin{lemma}  \label{lemma:SuffCondSPQBinary}
	If $P_i$ and $Q_i$ are Bernoulli distributions for all $i \in \N$ and $\rho$, adding Laplace noise  $N_\theta$ with
	\begin{equation}
		\theta = \frac{1}{\epsilon}
	\end{equation}
	attains $(\epsilon, \SetPair_{P,Q})$-pufferfish privacy in $Y$. 
\end{lemma}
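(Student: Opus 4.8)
The plan is to derive this as a corollary of Theorem~\ref{theo:SuffCondSPQ}: it suffices to show that, for Bernoulli $P_i$ and $Q_i$, the quantity $\sup_{t \in \D_i \colon i \in \N, \rho} \triangle^*(t)$ appearing there is at most $1$, so that the prescribed scale parameter collapses to $\theta = \frac{1}{\epsilon}$. First I would record the structure of the two CDFs. Writing $\Pr(D_i = 1) = p_i$ under $P_i$ and $\Pr(D_i = 1) = q_i$ under $Q_i$, both $F_{P_i}$ and $F_{Q_i}$ are step functions that vanish for $t < 0$, take the constant values $1 - p_i$ and $1 - q_i$ on $[0,1)$, and equal $1$ for $t \geq 1$; in particular both supports sit inside $\{0,1\} \subseteq [0,1]$.

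Next I would bound $\triangle^*(t)$ branch by branch for $t \in \D_i = \{0,1\}$. In the second branch (when $F_{P_i}(t) > F_{Q_i}(t)$), as soon as the shift satisfies $t + \triangle \geq 1$ we have $F_{Q_i}(t+\triangle) = 1 \geq F_{P_i}(t)$, so the defining condition is met and $\triangle^*(t) \leq 1 - t$. Symmetrically, in the third branch (when $F_{P_i}(t) < F_{Q_i}(t)$), once $t - \triangle < 0$ we have $F_{Q_i}(t - \triangle) = 0 \leq F_{P_i}(t)$, whence $\triangle^*(t) \leq t$. Since $t \in [0,1]$, in every case $\triangle^*(t) \leq \max\{t, 1-t\} \leq 1$, with $\triangle^*(t) = 0$ when $p_i = q_i$. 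Taking the supremum over $t \in \D_i$, over $i \in \N$, and over $\rho$ therefore gives $\sup \triangle^*(t) \leq 1$, and Theorem~\ref{theo:SuffCondSPQ} immediately yields $\theta = \frac{1}{\epsilon}$.

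I expect the only delicate point to be the behaviour of the step CDFs at the jump $t = 1$ together with the degenerate sub-cases where the defining set $\{\triangle > 0 \colon \cdots\}$ is satisfied by every positive $\triangle$ (forcing $\triangle^*(t) \to 0$); both are sidestepped by the bound above, which uses only that the support of $Q_i$ lies in $[0,1]$ and never needs the exact crossing value. As a cross-check I would also note the purely geometric version of the same conclusion: because both marginals are supported on $\{0,1\}$, any coupling --- in particular the Kantorovich plan $\pi^*$ of~\eqref{eq:piStar} between the two induced query priors --- has $\supp(\pi^*) \subseteq \{0,1\}^2$, so $\sup_{(x,x') \in \supp(\pi^*)} |x - x'| \leq 1$; Proposition~\ref{prop:Kantorovich}, combined with the reduction in the proof of Theorem~\ref{theo:SuffCondSPQ} that isolates the dependence on user $i$ and discards the factor $\prod_{j} \zeta_j$, then returns $\theta = \frac{1}{\epsilon}$ once more.
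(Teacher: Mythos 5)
Your main argument is correct, and it is in fact the route the paper itself only gestures at in its closing sentence (``the other convenient way is to apply \autoref{theo:SuffCondSPQ}''): you bound $\triangle^*(t)$ of \eqref{eq:Tri_t} directly from the user-level Bernoulli CDFs $F_{P_i},F_{Q_i}$, using only that their supports lie in $\{0,1\}$, to get $\sup_t\triangle^*(t)\leq 1$. The paper's actual proof works one level up, via Proposition~\ref{prop:ComputePiStar} applied to the query-answer CDFs: it computes $F_{X|S}(x|s_{P_i},\rho)-F_{X|S}(x\mp 1|s_{Q_i},\rho)$ explicitly as nonnegative/nonpositive combinations of $\Pr(f(D_{-i})=\cdot)$, which shows the induced transport plan never moves mass farther than one unit. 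Your version is cleaner and makes the ``depends only on user $i$'' message more transparent; the paper's version has the side benefit of producing the exact expressions for $\pi^*(x',x')$ and $\pi^*(x'\mp1,x')$ that are then reused in the proof of Lemma~\ref{lemma:SuffCondSPQBinaryRelax}, which your route does not yield. Two small caveats. First, when $p_i>q_i$ your branch-3 estimate gives $\triangle^*(0)\leq 0$ (the defining set is all of $(0,\infty)$), so for that ordering of the pair your bound degenerates to $0$; this is harmless because you only claim an upper bound of $1$ and because the two-sided pufferfish condition \eqref{eq:pufferfish_privacy} forces the reversed pair $(s_{Q_i},s_{P_i})$ to be handled as well, where branch 2 gives exactly $1$ --- but you should say this rather than leave the impression that a single ordering suffices. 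Second, your ``geometric cross-check'' is wrong as written: the two \emph{induced query priors} $P_{X|S}(\cdot|s_{P_i})$ and $P_{X|S}(\cdot|s_{Q_i})$ are convolutions with $\Pr(f(D_{-i})=\cdot)$ and are certainly not supported on $\{0,1\}$, so you cannot conclude $\supp(\pi^*)\subseteq\{0,1\}^2$ for their Kantorovich plan from a support argument; the correct statement (that this plan moves mass by at most $1$) is precisely what the paper's computation, or your main argument plus \autoref{theo:SuffCondSPQ}, establishes. Drop or repair that remark; the rest stands.
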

\begin{proof}
	Denote $p_i = P_i(1)$ and  $q_i = Q_i(1)$. We prove the lemma by Proposition~\ref{prop:ComputePiStar} below. 
	\begin{align}
		F_{X|S}(x|s_{P_i},&\rho) - F_{X|S}(x|s_{Q_i},\rho) \nonumber  \\
		&= \int \Pr(f(D_{-i}) = x-t) \big( F_{P_i}(t) - F_{Q_i}(t) \big) \dif t  \label{eq:SuffCondSPQBinary}\\
		&= \Pr(f(D_{-i}) = x) \big( P_i(0) - Q_i(0) \big) \nonumber \\
		&= \Pr(f(D_{-i}) = x) \big( q_i - p_i \big)  \nonumber \\
		&\begin{cases}
			 = 0 & p_i = q_i \\
			 > 0 & p_i < q_i \\
			 < 0 & p_i > q_i
		\end{cases} \nonumber 
	\end{align}
	Using \eqref{eq:SuffCondSPQBinary}, 
	\begin{align*}
		F_{X|S}(x|s_{P_i},&\rho) - F_{X|S}(x-1|s_{Q_i},\rho) \nonumber  \\
		&=  \int \Pr(f(D_{-i}) = x-t) \big( F_{P_i}(t) - F_{Q_i}(t-1) \big) \dif t\\
		&=  (1-p_i) \Pr(f(D_{-i}) = x) +  q_i \Pr(f(D_{-i}) = x-1) \geq 0, \\
		 F_{X|S}(x|s_{P_i},&\rho) - F_{X|S}(x+1|s_{Q_i},\rho) \\
		 &=  \int \Pr(f(D_{-i}) = x-t) \big( F_{P_i}(t) - F_{Q_i}(t+1) \big) \dif t\\
		 &= -p_i  \Pr(f(D_{-i}) = x) - (1-q_i)  \Pr(f(D_{-i}) = x+1)  \leq 0.
	\end{align*}
	Therefore, $\max_{\rho, (s_i, s_j) \in \mathbb{S}} \sup_{x \in \X} \triangle^*(x) = 1$ and $\theta = 1/\epsilon$ attains pufferfish privacy. The other convenient way is to apply \autoref{theo:SuffCondSPQ}, which is more straightforward to see that $ \sup_{t \in \D_i \colon i \in \N, \rho } \triangle^*(t) = 1$. 
\end{proof}

\paragraph{Relaxed sufficient condition}

While Lemma~\ref{lemma:SuffCondSPQBinary} is just a simplified version of Theorem~\ref{theo:SuffCondSPQ} for binary $D_i$, we can improve it by a relaxed sufficient condition which produces a smaller value of the scale parameter $\theta$ that reduce the Laplace noise amount for attaining pufferfish privacy. 

Rewrite  \eqref{eq:W1:SuffCond1Main} as
\begin{align*} 
	 \int P_{N_{\theta}}&(y-x') \big( e^{\frac{|x-x'|}{\theta}}  - e^{\epsilon} \big)  \dif \pi^*(x,x')\\
	 & = \int P_{N_{\theta}}(y-x') \underbrace{ \Big( \int (e^{\frac{|x-x'|}{\theta}} - e^{\epsilon}) \pi^*(x,x') \dif x  \Big) }_{\leq0, \text{ for relaxed condition}} \dif x'
	, \  \forall y.
\end{align*}
Then, it is sufficient to request 
\begin{equation} \label{eq:RelaxSuff}
	 \int (e^{\frac{|x-x'|}{\theta}} - e^{\epsilon}) \pi^*(x,x') \dif x  \leq 0, \quad \forall x'
\end{equation}
to make $P_{Y|s_i, \rho} \leq e^{\epsilon} P_{Y|S}(y|s_j,\rho)$. 
This is a relaxed sufficient condition that is proposed in \cite[Theorem~2]{Ding2022AISTATS}, which will result in a smaller value of $\theta$ that attains pufferfish privacy.

\begin{figure}[t]
	\vspace*{-0.8cm}
	\includegraphics[scale=0.6]{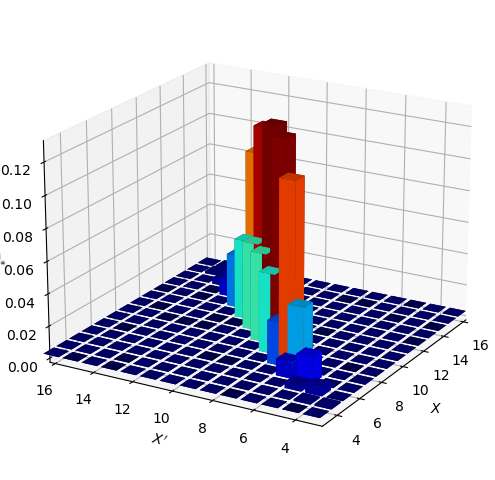}
	\caption{Adding user $4$ to Table~\ref{tab:ExpSetting} by assuming binary r.v. with two choices of probability: $P_4(\cdot)$ and $Q_4(\cdot)$ being $\text{Bernoulli(0.2)}$ and $ \text{Bernoulli(0.9)}$, respectively. The Kantorovich optimal transport plan is shown in figure, where $\sup_{(x,x') \in \supp(\pi^*)}=1$, verifying the sufficient condition $\theta = \frac{1}{\epsilon}$ in Lemma~\ref{lemma:SuffCondSPQBinary}.}
	\label{fig:SuffCondSPQBinary}
\end{figure}

\begin{lemma}[Relaxed sufficient condition]  \label{lemma:SuffCondSPQBinaryRelax}
	If $P_i$ and $Q_i$ are Bernoulli distributions for all $i \in \N$ and $\rho$, adding Laplace noise  $N_\theta$ with
	\begin{equation} \label{eq:SuffCondSPQBinaryRelax}
		\theta = \sup_{x \in \X, \in \N,\rho} \frac{1}{\log \Big( e^{\epsilon} + (e^{\epsilon}-1) \psi_i(x) \Big)}
	\end{equation}
	attains $(\epsilon, \SetPair_{P,Q})$-pufferfish privacy in $Y$, where 
	$$
		\psi_i(x) = 
		\begin{cases}
			\frac{1}{q_i - p_i} \Big( (1-q_i) \frac{\Pr(f(D_{-i}) = x)}{\Pr(f(D_{-i}) = x-1)} + p_i \Big) & p_i < q_i \\
			\frac{1}{p_i - q_i} \Big( (1-p_i) \frac{\Pr(f(D_{-i}) = x)}{\Pr(f(D_{-i}) = x-1)} + q_i \Big) & p_i > q_i
		\end{cases}
	$$
	
\end{lemma}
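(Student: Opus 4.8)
The plan is to specialize the relaxed sufficient condition \eqref{eq:RelaxSuff} to the binary case, where the Kantorovich plan $\pi^*$ has a two-point structure that makes the integral over $x$ collapse to two terms. Recall from the proof of Lemma~\ref{lemma:SuffCondSPQBinary} that for Bernoulli $D_i$ one has $F_{X|S}(x|s_{P_i},\rho)-F_{X|S}(x|s_{Q_i},\rho)=\Pr(f(D_{-i})=x)(q_i-p_i)$ and $\triangle^*(x)=1$, so $\pi^*$ never moves mass by more than one unit. Writing $r(x)=\Pr(f(D_{-i})=x)$ and using $P_{X|S}(x|s_{P_i})=(1-p_i)r(x)+p_i r(x-1)$ together with the analogue for $Q_i$, I would first make this structure explicit: in the coupling for which mass travels to the right, each target point $x'$ receives its probability either by staying at $x'$ or by a single unit move from $x'-1$.

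First I would read off the two surviving masses. The unit-move mass is the absolute gap of the cumulative functions across the boundary, $\pi^*(x'-1,x')=|q_i-p_i|\,r(x'-1)$, and the stay mass is the target marginal minus the incoming mass, which simplifies to $\pi^*(x',x')=(1-\max\{p_i,q_i\})r(x')+\min\{p_i,q_i\}\,r(x'-1)$. Their ratio is exactly $\psi_i(x')$: for $p_i<q_i$ the right-moving coupling is $P_i\!\to\!Q_i$ and one recovers the first branch of $\psi_i$, while for $p_i>q_i$ the right-moving coupling is $Q_i\!\to\!P_i$ and one recovers the second branch. This is the step that explains why both branches carry the same factor $\Pr(f(D_{-i})=x)/\Pr(f(D_{-i})=x-1)$.

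Next I would substitute into \eqref{eq:RelaxSuff}. Since only the distance-$0$ and distance-$1$ terms remain, for each fixed $x'$ the condition becomes $(1-e^{\epsilon})\,\pi^*(x',x')+(e^{1/\theta}-e^{\epsilon})\,\pi^*(x'-1,x')\le 0$. Dividing by the strictly positive unit-move mass (nonzero precisely when $p_i\neq q_i$ and $r(x'-1)>0$) and rearranging isolates the scale parameter as $e^{1/\theta}\le e^{\epsilon}+(e^{\epsilon}-1)\psi_i(x')$, that is $\theta\ge 1/\log\!\big(e^{\epsilon}+(e^{\epsilon}-1)\psi_i(x')\big)$. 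Taking the supremum over $x'\in\X$, over $i\in\N$, and over the prior belief $\rho$, and choosing equality to minimize the noise, then yields \eqref{eq:SuffCondSPQBinaryRelax}.

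The main obstacle I anticipate is the geometry of the coupling and the two-sidedness of \eqref{eq:pufferfish_privacy}. One must track carefully in which direction mass flows — this flips with the sign of $q_i-p_i$, and it is precisely this flip that produces the two cases in the definition of $\psi_i$ — and confirm that the adjacent point feeding the relaxed condition is always $x'-1$ once the right-moving coupling is used. Because the pufferfish constraint is a two-sided ratio bound, I would also have to examine the left-moving coupling, which fixes the opposite marginal and yields a different per-point constraint; verifying that the right-moving coupling furnishes the binding direction (so that the single expression $\psi_i$ governs the choice of $\theta$), together with handling the degenerate points where $r(x'-1)=0$ make the relaxed inequality vacuous, are the places where care is genuinely needed rather than routine.
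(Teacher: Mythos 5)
Your proposal follows essentially the same route as the paper's proof: it specializes the relaxed condition \eqref{eq:RelaxSuff} to the two-point support of $\pi^*$ in the Bernoulli case, computes the stay mass and the unit-move mass (your unified $\max/\min$ expressions agree with the paper's case-by-case formulas $\pi^*(x',x')$ and $\pi^*(x'\mp 1,x')$), and solves the resulting two-term inequality for $\theta$, recovering $\psi_i$ as the mass ratio. The paper likewise splits on the sign of $q_i-p_i$ and, like you, verifies only one direction of the two-sided pufferfish ratio in each case, so the concern you flag about also checking the opposite-marginal (left-moving) constraint is a genuine loose end --- but one the paper's own proof leaves open as well, not a divergence in approach.
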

\begin{proof}
From the proof of  Lemma~\ref{lemma:SuffCondSPQBinary}, it can be seen that if $p_i < q_i$,  
\begin{align} \label{eq:RelaxSuff2}
	\min\Set{ F_{X|S}(x|s_{P_i},\rho), &F_{X|S}(x'|s_{Q_i},\rho) } \nonumber \\
	&= 
	\begin{cases}
		F_{X|S}(x'|s_{Q_i},\rho) & x' \leq x \\
		F_{X|S}(x|s_{P_i},\rho) & x' > x
	\end{cases}
\end{align}
and the support of $\pi^*$ is 
$ \supp(\pi^*) = \Set{(x,x') \colon x = x'-1 \text{ or } x = x' } .$
We have  \eqref{eq:RelaxSuff} being 
\begin{multline} \label{eq:RelaxSuff1}
	\pi^*(x',x') + e^{\frac{1}{\theta}} \pi^*(x'-1,x') \leq (\pi^*(x',x') + \pi^*(x'-1,x')) e^\epsilon \\
	\Longrightarrow e^{\frac{1}{\theta}} \leq e^{\epsilon} + (e^{\epsilon} - 1) \frac{\pi^*(x',x')}{\pi^*(x'-1,x')}
\end{multline}
for all $x' \in \Set{1,\dotsc,|\N|}$. 
Following  \eqref{eq:RelaxSuff2}, \eqref{eq:PiStarCompute1} and  \eqref{eq:PiStarCompute2}, we can calculate $\pi^*(x',x')$ and $\pi^*(x'-1,x')$ as follows.  
\begin{align*}
	\pi^*(x',x') & = F_{X|S}(x'|s_{Q_i},\rho) -  F_{X|S}(x'-1|s_{Q_i},\rho)  \\
	&\qquad\quad -  F_{X|S}(x'-1|s_{P_i},\rho) + F_{X|S}(x'-1|s_{Q_i},\rho) \\
	&= F_{X|S}(x'|s_{Q_i},\rho) -  F_{X|S}(x'-1|s_{P_i},\rho) \\
	&= p_i  \Pr(f(D_{-i}) = x-1) + (1-q_i)  \Pr(f(D_{-i}) = x) , \\
	\pi^*(x'-1,x') & =  F_{X|S}(x'-1|s_{P_i},\rho) -  F_{X|S}(x'-1|s_{Q_i},\rho) \\
	&\qquad\quad -  F_{X|S}(x'-2|s_{P_i},\rho) + F_{X|S}(x'-2|s_{P_i},\rho) \\
	&= F_{X|S}(x'-1|s_{P_i},\rho) -  F_{X|S}(x'-1|s_{Q_i},\rho) \\
	&= \big( q_i - p_i \big) \Pr(f(D_{-i}) = x-1).
\end{align*}
Substitute in~\eqref{eq:RelaxSuff1} and choose the minimum value of $\theta$. We get
$$  \frac{\pi^*(x',x')}{\pi^*(x'-1,x')}  = \frac{1}{q_i - p_i} \Big( (1-q_i) \frac{\Pr(f(D_{-i}) = x)}{\Pr(f(D_{-i}) = x-1)} + p_i \Big) $$
If $p_i > q_i$, 
\begin{align} \label{eq:RelaxSuff3}
	\min\Set{ F_{X|S}(x|s_{P_i},\rho), &F_{X|S}(x'|s_{Q_i},\rho) } \nonumber \\
	&= 
	\begin{cases}
		F_{X|S}(x'|s_{Q_i},\rho) & x' < x \\
		F_{X|S}(x|s_{P_i},\rho) & x' \geq x
	\end{cases}
\end{align}
and the support of $\pi^*$ is 
$ \supp(\pi^*) = \Set{(x,x') \colon x = x'+1 \text{ or } x = x' } .$
for all $x' \in \Set{0,\dotsc,|\N|-1}$. And, 
\begin{align*}
	\pi^*(x',x') & = F_{X|S}(x'|s_{P_i},\rho) -  F_{X|S}(x'-1|s_{Q_i},\rho)  \\
	&\qquad\quad -  F_{X|S}(x'-1|s_{P_i},\rho) + F_{X|S}(x'-1|s_{P_i},\rho) \\
	&= F_{X|S}(x'|s_{P_i},\rho) -  F_{X|S}(x'-1|s_{Q_i},\rho) \\
	&=  (1-p_i) \Pr(f(D_{-i}) = x) +  q_i \Pr(f(D_{-i}) = x-1), \\
	\pi^*(x'+1,x') & =  F_{X|S}(x'|s_{Q_i},\rho) -  F_{X|S}(x'-1|s_{Q_i},\rho) \\
	&\qquad\quad -  F_{X|S}(x'|s_{P_i},\rho) + F_{X|S}(x'-1|s_{Q_i},\rho) \\
	&= F_{X|S}(x'|s_{Q_i},\rho) -  F_{X|S}(x'|s_{P_i},\rho)  \\
	&= \big( p_i - q_i \big) \Pr(f(D_{-i}) = x-1).
\end{align*}
So, the sufficient condition~\eqref{eq:RelaxSuff} is 
\begin{multline*} 
	\pi^*(x',x') + e^{\frac{1}{\theta}} \pi^*(x'+1,x') \leq (\pi^*(x',x') + \pi^*(x'+1,x')) e^\epsilon \\
	\Longrightarrow e^{\frac{1}{\theta}} \leq e^{\epsilon} + (e^{\epsilon} - 1) \frac{\pi^*(x',x')}{\pi^*(x'+1,x')}
\end{multline*}
where 
$$  \frac{\pi^*(x',x')}{\pi^*(x'+1,x')}  = \frac{1}{p_i - q_i} \Big( (1-p_i) \frac{\Pr(f(D_{-i}) = x)}{\Pr(f(D_{-i}) = x-1)} + q_i \Big) $$
Lemma proves. 
\end{proof}

\begin{figure}[t]
	\scalebox{0.6}{
\begin{tikzpicture}

\definecolor{darkgray176}{RGB}{176,176,176}
\definecolor{darkorange25512714}{RGB}{255,127,14}
\definecolor{steelblue31119180}{RGB}{31,119,180}

\begin{axis}[
width=4in,
height=2.5in,
scale only axis,
xmin=0,
xmax=0.04,
xlabel={\Large privacy budget $\epsilon$},
ymin=0,
ymax=1000,
ylabel={\Large $\theta$},
grid=major,
legend style={at={(0.98,0.93)},draw=darkgray!60!black,fill=white,legend cell align=left}
]

\addplot [
color=blue,
line width = 1.5pt]
table {%
0.001 1000
0.00179591836734694 556.818181818182
0.00259183673469388 385.826771653543
0.00338775510204082 295.180722891566
0.00418367346938775 239.024390243902
0.00497959183673469 200.819672131148
0.00577551020408163 173.144876325088
0.00657142857142857 152.173913043478
0.00736734693877551 135.734072022161
0.00816326530612245 122.5
0.00895918367346939 111.617312072893
0.00975510204081633 102.510460251046
0.0105510204081633 94.7775628626693
0.0113469387755102 88.1294964028777
0.0121428571428571 82.3529411764706
0.0129387755102041 77.2870662460568
0.013734693877551 72.8083209509658
0.014530612244898 68.8202247191011
0.0153265306122449 65.2463382157124
0.0161224489795918 62.0253164556962
0.0169183673469388 59.1073582629674
0.0177142857142857 56.4516129032258
0.0185102040816327 54.0242557883131
0.0193061224489796 51.7970401691332
0.0201020408163265 49.746192893401
0.0208979591836735 47.8515625
0.0216938775510204 46.0959548447789
0.0224897959183673 44.464609800363
0.0232857142857143 42.9447852760736
0.0240816326530612 41.5254237288136
0.0248775510204082 40.1968826907301
0.0256734693877551 38.9507154213037
0.026469387755102 37.7794911333847
0.027265306122449 36.6766467065868
0.0280612244897959 35.6363636363636
0.0288571428571429 34.6534653465347
0.0296530612244898 33.7233310392292
0.0304489795918367 32.8418230563003
0.0312448979591837 32.0052253429131
0.0320408163265306 31.2101910828026
0.0328367346938776 30.4536979490367
0.0336326530612245 29.7330097087379
0.0344285714285714 29.045643153527
0.0352244897959184 28.3893395133256
0.0360204081632653 27.7620396600567
0.0368163265306122 27.1618625277162
0.0376122448979592 26.587086272382
0.0384081632653061 26.0361317747078
0.0392040816326531 25.5075481520042
0.04 25
};
\addlegendentry{\Large $\theta$ by Lemma~\ref{lemma:SuffCondSPQBinary}};

\addplot [
color=red,
dashed,
line width = 1.5pt]
table {%
0.001 624.154877161015
0.00179591836734694 347.624008431036
0.00259183673469388 240.930956324606
0.00338775510204082 184.370753739136
0.00418367346938775 149.330993903766
0.00497959183673469 125.492448146978
0.00577551020408163 108.22422586348
0.00657142857142857 95.1389735113898
0.00736734693877551 84.8809917109327
0.00816326530612245 76.623303840413
0.00895918367346939 69.8328019637316
0.00975510204081633 64.1503631615613
0.0105510204081633 59.3252265559775
0.0113469387755102 55.1769886376612
0.0121428571428571 51.5725449686247
0.0129387755102041 48.4115423605899
0.013734693877551 45.6168892855455
0.014530612244898 43.128385055871
0.0153265306122449 40.8983340535302
0.0161224489795918 38.8884590356808
0.0169183673469388 37.0676857146936
0.0177142857142857 35.410524582732
0.0185102040816327 33.8958702148819
0.0193061224489796 32.5060975734127
0.0201020408163265 31.2263729971174
0.0208979591836735 30.0441226403522
0.0216938775510204 28.9486179254975
0.0224897959183673 27.9306490209393
0.0232857142857143 26.98226528325
0.0240816326530612 26.0965671710051
0.0248775510204082 25.2675381029541
0.0256734693877551 24.4899075921741
0.026469387755102 23.7590390730596
0.027265306122449 23.0708373753828
0.0280612244897959 22.4216719445845
0.0288571428571429 21.8083127681857
0.0296530612244898 21.2278766209994
0.0304489795918367 20.677781741055
0.0312448979591837 20.1557094329119
0.0320408163265306 19.6595713937902
0.0328367346938776 19.1874817915278
0.0336326530612245 18.7377333071946
0.0344285714285714 18.3087765007836
0.0352244897959184 17.8992019743678
0.0360204081632653 17.5077249000295
0.0368163265306122 17.1331715546986
0.0376122448979592 16.7744675646206
0.0384081632653061 16.4306276114581
0.0392040816326531 16.1007463923089
0.04 15.7839906589862
};

\addlegendentry{\Large $\theta$ by relaxed sufficient condition in Lemma~\ref{lemma:SuffCondSPQBinaryRelax}};
\end{axis}

\end{tikzpicture}}
	\caption{For the same setting in Figure~\ref{fig:SuffCondSPQBinary}, we apply Lemma~\ref{lemma:SuffCondSPQBinaryRelax} instead to get the value of $\theta$ in \eqref{eq:SuffCondSPQBinaryRelax}. It is smaller than $\theta= \frac{1}{\epsilon}$ in Lemma~\ref{lemma:SuffCondSPQBinary} for all value of $\epsilon$.}
	\label{fig:SuffCondSPQBinaryRelax}
\end{figure}
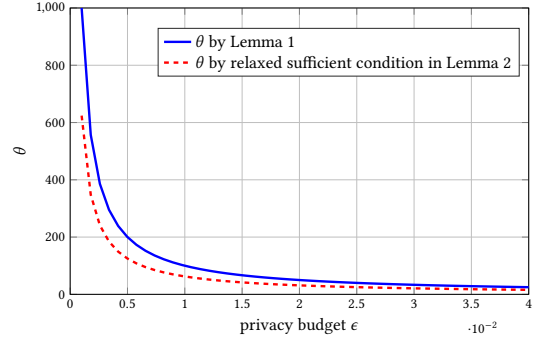

Consider adding a $4$th user obtaining a binary r.v. to the system in Table~\ref{tab:ExpSetting} over two choices $D_4 \sim \text{Bernoulli}(0.2)$ and $D_4 \sim \text{Bernoulli}(0.9)$. Following Lemma~\ref{lemma:SuffCondSPQBinaryRelax}, without any calculation, we need to set $\theta = \frac{1}{\epsilon}$. This can be seem by the resulting Kantorovich optimal transport plan $\pi^*$ in Figure~\ref{fig:SuffCondSPQBinary}, where $\sup_{(x,x') \in \supp(\pi^*)}=1$. 
Instead, applying Lemma~\ref{lemma:SuffCondSPQBinaryRelax} by specifying $p_4= 0.2$ and $q_4= 0.9$, we plot the resulting $\theta$ in \eqref{eq:SuffCondSPQBinaryRelax} and compare to $\frac{1}{\epsilon}$ in Figure~\ref{fig:SuffCondSPQBinaryRelax}.  It can be seen that a smaller value of scale parameter $\theta$ can be produced by Lemma~\ref{lemma:SuffCondSPQBinaryRelax}. 
It should be noted that the computation in \eqref{eq:SuffCondSPQBinaryRelax} requires knowledge of statistics of the rest of users and Bernoulli probability $p_i$ and $q_i$, while $\theta = \frac{1}{\epsilon}$ generally applies regardless of any system parameters.

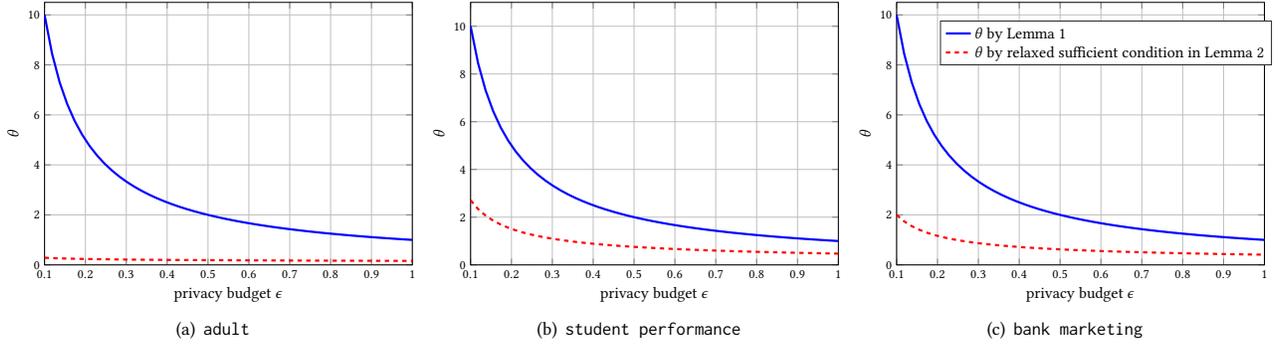
\begin{figure*}[t]
	\subfigure[\texttt{adult}]{\scalebox{0.55}{
\begin{tikzpicture}

\definecolor{darkgray176}{RGB}{176,176,176}
\definecolor{darkorange25512714}{RGB}{255,127,14}
\definecolor{steelblue31119180}{RGB}{31,119,180}

\begin{axis}[
width=3.5in,
height=2.5in,
scale only axis,
xmin=0.1,
xmax=1,
xlabel={\Large privacy budget $\epsilon$},
ymin=0,
ymax=10.5,
ylabel={\Large $\theta$},
grid=major,
legend style={at={(1.02,0.93)},draw=darkgray!60!black,fill=white,legend cell align=left}
]
\addplot [
color=blue,
line width = 1.5pt
]
table {%
0.1 10
0.118367346938776 8.44827586206896
0.136734693877551 7.3134328358209
0.155102040816327 6.44736842105263
0.173469387755102 5.76470588235294
0.191836734693878 5.21276595744681
0.210204081632653 4.75728155339806
0.228571428571429 4.375
0.246938775510204 4.0495867768595
0.26530612244898 3.76923076923077
0.283673469387755 3.52517985611511
0.302040816326531 3.31081081081081
0.320408163265306 3.12101910828025
0.338775510204082 2.95180722891566
0.357142857142857 2.8
0.375510204081633 2.66304347826087
0.393877551020408 2.53886010362694
0.412244897959184 2.42574257425743
0.430612244897959 2.32227488151659
0.448979591836735 2.22727272727273
0.46734693877551 2.13973799126638
0.485714285714286 2.05882352941176
0.504081632653061 1.98380566801619
0.522448979591837 1.9140625
0.540816326530612 1.84905660377358
0.559183673469388 1.78832116788321
0.577551020408163 1.73144876325088
0.595918367346939 1.67808219178082
0.614285714285714 1.62790697674419
0.63265306122449 1.58064516129032
0.651020408163265 1.53605015673981
0.669387755102041 1.49390243902439
0.687755102040816 1.4540059347181
0.706122448979592 1.41618497109827
0.724489795918367 1.38028169014085
0.742857142857143 1.34615384615385
0.761224489795918 1.31367292225201
0.779591836734694 1.28272251308901
0.797959183673469 1.25319693094629
0.816326530612245 1.225
0.83469387755102 1.19804400977995
0.853061224489796 1.17224880382775
0.871428571428571 1.14754098360656
0.889795918367347 1.12385321100917
0.908163265306122 1.10112359550562
0.926530612244898 1.07929515418502
0.944897959183673 1.05831533477322
0.963265306122449 1.03813559322034
0.981632653061225 1.01871101871102
1 1
};

\addplot [
color=red,
dashed,
line width = 1.5pt]
table {%
0.1 0.285742085742714
0.118367346938776 0.27227955597568
0.136734693877551 0.261595245601818
0.155102040816327 0.25282065426752
0.173469387755102 0.245426780867657
0.191836734693878 0.239070443359056
0.210204081632653 0.23351787679279
0.228571428571429 0.228603531292134
0.246938775510204 0.224206382862942
0.26530612244898 0.220235588589964
0.283673469387755 0.216621419848412
0.302040816326531 0.213309321129236
0.320408163265306 0.210255895318646
0.338775510204082 0.207426117697148
0.357142857142857 0.204791357238603
0.375510204081633 0.202327942284721
0.393877551020408 0.200016101816691
0.412244897959184 0.197839171214847
0.430612244897959 0.195782987699292
0.448979591836735 0.193835424059639
0.46734693877551 0.191986024719731
0.485714285714286 0.190225718564719
0.504081632653061 0.188546590066181
0.522448979591837 0.186941695188438
0.540816326530612 0.185404912055149
0.559183673469388 0.183930818859812
0.577551020408163 0.182514593321269
0.595918367346939 0.181151929319831
0.614285714285714 0.179838967340412
0.63265306122449 0.178572236092165
0.651020408163265 0.177348603236839
0.669387755102041 0.176165233587985
0.687755102040816 0.175019553474404
0.706122448979592 0.173909220218457
0.724489795918367 0.172832095881087
0.742857142857143 0.171786224583967
0.761224489795918 0.170769812844885
0.779591836734694 0.169781212462844
0.797959183673469 0.168818905569877
0.816326530612245 0.167881491531596
0.83469387755102 0.166967675431279
0.853061224489796 0.166076257915322
0.871428571428571 0.165206126213192
0.889795918367347 0.164356246174045
0.908163265306122 0.163525655186228
0.926530612244898 0.162713455865821
0.944897959183673 0.161918810417049
0.963265306122449 0.161140935581303
0.981632653061225 0.160379098103227
1 0.159632610652221
};
\end{axis}

\end{tikzpicture}}}
	\subfigure[\texttt{student performance}]{\scalebox{0.55}{
\begin{tikzpicture}

\definecolor{darkgray176}{RGB}{176,176,176}
\definecolor{darkorange25512714}{RGB}{255,127,14}
\definecolor{steelblue31119180}{RGB}{31,119,180}

\begin{axis}[
width=3.5in,
height=2.5in,
scale only axis,
xmin=0.1,
xmax=1,
xlabel={\Large privacy budget $\epsilon$},
ymin=0,
ymax=11,
ylabel={\Large $\theta$},
grid=major,
legend style={at={(1.02,0.93)},draw=darkgray!60!black,fill=white,legend cell align=left}
]
\addplot [
color=blue,
line width = 1.5pt
]
table {%
0.1 10
0.118367346938776 8.44827586206896
0.136734693877551 7.3134328358209
0.155102040816327 6.44736842105263
0.173469387755102 5.76470588235294
0.191836734693878 5.21276595744681
0.210204081632653 4.75728155339806
0.228571428571429 4.375
0.246938775510204 4.0495867768595
0.26530612244898 3.76923076923077
0.283673469387755 3.52517985611511
0.302040816326531 3.31081081081081
0.320408163265306 3.12101910828025
0.338775510204082 2.95180722891566
0.357142857142857 2.8
0.375510204081633 2.66304347826087
0.393877551020408 2.53886010362694
0.412244897959184 2.42574257425743
0.430612244897959 2.32227488151659
0.448979591836735 2.22727272727273
0.46734693877551 2.13973799126638
0.485714285714286 2.05882352941176
0.504081632653061 1.98380566801619
0.522448979591837 1.9140625
0.540816326530612 1.84905660377358
0.559183673469388 1.78832116788321
0.577551020408163 1.73144876325088
0.595918367346939 1.67808219178082
0.614285714285714 1.62790697674419
0.63265306122449 1.58064516129032
0.651020408163265 1.53605015673981
0.669387755102041 1.49390243902439
0.687755102040816 1.4540059347181
0.706122448979592 1.41618497109827
0.724489795918367 1.38028169014085
0.742857142857143 1.34615384615385
0.761224489795918 1.31367292225201
0.779591836734694 1.28272251308901
0.797959183673469 1.25319693094629
0.816326530612245 1.225
0.83469387755102 1.19804400977995
0.853061224489796 1.17224880382775
0.871428571428571 1.14754098360656
0.889795918367347 1.12385321100917
0.908163265306122 1.10112359550562
0.926530612244898 1.07929515418502
0.944897959183673 1.05831533477322
0.963265306122449 1.03813559322034
0.981632653061225 1.01871101871102
1 1
};

\addplot [
color=red,
dashed,
line width = 1.5pt]
table {%
0.1 2.70750129924749
0.118367346938776 2.33776444136251
0.136734693877551 2.06635472773713
0.155102040816327 1.85839316868999
0.173469387755102 1.69376808614489
0.191836734693878 1.56006630150161
0.210204081632653 1.44921019749489
0.228571428571429 1.35571612027451
0.246938775510204 1.27572996715259
0.26530612244898 1.20646333413532
0.283673469387755 1.14584868200559
0.302040816326531 1.09232029557105
0.320408163265306 1.04467056265628
0.338775510204082 1.00195299072149
0.357142857142857 0.963415137916189
0.375510204081633 0.928451217938119
0.393877551020408 0.896567958087157
0.412244897959184 0.867359578196158
0.430612244897959 0.840489168016229
0.448979591836735 0.815674631466135
0.46734693877551 0.792677941916132
0.485714285714286 0.771296832504917
0.504081632653061 0.75135830077276
0.522448979591837 0.732713481416459
0.540816326530612 0.715233562158743
0.559183673469388 0.698806503093456
0.577551020408163 0.683334380808018
0.595918367346939 0.66873122262454
0.614285714285714 0.654921228493145
0.63265306122449 0.641837301854469
0.651020408163265 0.629419828534986
0.669387755102041 0.617615656104675
0.687755102040816 0.606377236281383
0.706122448979592 0.595661900744951
0.724489795918367 0.58543124672902
0.742857142857143 0.575650613427857
0.761224489795918 0.566288633911619
0.779591836734694 0.557316850124875
0.797959183673469 0.548709380828147
0.816326530612245 0.540442634164859
0.83469387755102 0.532495057998073
0.853061224489796 0.52484692234036
0.871428571428571 0.517480129155683
0.889795918367347 0.510378045590389
0.908163265306122 0.503525357327145
0.926530612244898 0.496907939278908
0.944897959183673 0.490512741271868
0.963265306122449 0.484327686724129
0.981632653061225 0.478341582624572
1 0.472544039364862
};
\end{axis}

\end{tikzpicture}}}
	\subfigure[\texttt{bank marketing}]{\scalebox{0.55}{
\begin{tikzpicture}

\definecolor{darkgray176}{RGB}{176,176,176}
\definecolor{darkorange25512714}{RGB}{255,127,14}
\definecolor{steelblue31119180}{RGB}{31,119,180}

\begin{axis}[
width=3.5in,
height=2.5in,
scale only axis,
xmin=0.1,
xmax=1,
xlabel={\Large privacy budget $\epsilon$},
ymin=0,
ymax=10.5,
ylabel={\Large $\theta$},
grid=major,
legend style={at={(1.02,0.93)},draw=darkgray!60!black,fill=white,legend cell align=left}
]

\addplot [
color=blue,
line width = 1.5pt
]
table {%
0.1 10
0.118367346938776 8.44827586206896
0.136734693877551 7.3134328358209
0.155102040816327 6.44736842105263
0.173469387755102 5.76470588235294
0.191836734693878 5.21276595744681
0.210204081632653 4.75728155339806
0.228571428571429 4.375
0.246938775510204 4.0495867768595
0.26530612244898 3.76923076923077
0.283673469387755 3.52517985611511
0.302040816326531 3.31081081081081
0.320408163265306 3.12101910828025
0.338775510204082 2.95180722891566
0.357142857142857 2.8
0.375510204081633 2.66304347826087
0.393877551020408 2.53886010362694
0.412244897959184 2.42574257425743
0.430612244897959 2.32227488151659
0.448979591836735 2.22727272727273
0.46734693877551 2.13973799126638
0.485714285714286 2.05882352941176
0.504081632653061 1.98380566801619
0.522448979591837 1.9140625
0.540816326530612 1.84905660377358
0.559183673469388 1.78832116788321
0.577551020408163 1.73144876325088
0.595918367346939 1.67808219178082
0.614285714285714 1.62790697674419
0.63265306122449 1.58064516129032
0.651020408163265 1.53605015673981
0.669387755102041 1.49390243902439
0.687755102040816 1.4540059347181
0.706122448979592 1.41618497109827
0.724489795918367 1.38028169014085
0.742857142857143 1.34615384615385
0.761224489795918 1.31367292225201
0.779591836734694 1.28272251308901
0.797959183673469 1.25319693094629
0.816326530612245 1.225
0.83469387755102 1.19804400977995
0.853061224489796 1.17224880382775
0.871428571428571 1.14754098360656
0.889795918367347 1.12385321100917
0.908163265306122 1.10112359550562
0.926530612244898 1.07929515418502
0.944897959183673 1.05831533477322
0.963265306122449 1.03813559322034
0.981632653061225 1.01871101871102
1 1
};
\addlegendentry{\Large $\theta$ by Lemma~\ref{lemma:SuffCondSPQBinary}};

\addplot [
color=red,
dashed,
line width = 1.5pt]
table {%
0.1 1.99895593382343
0.118367346938776 1.74173598913922
0.136734693877551 1.55239599341147
0.155102040816327 1.40690414058039
0.173469387755102 1.29139681739232
0.191836734693878 1.19731248528637
0.210204081632653 1.11907639942985
0.228571428571429 1.05290139816288
0.246938775510204 0.996123720120083
0.26530612244898 0.946814601058905
0.283673469387755 0.903542680423838
0.302040816326531 0.865223038130963
0.320408163265306 0.831018110836497
0.338775510204082 0.800270806915549
0.357142857142857 0.77245823476801
0.375510204081633 0.747158991177693
0.393877551020408 0.724029586791379
0.412244897959184 0.702787161598644
0.430612244897959 0.683196614325065
0.448979591836735 0.665060883274554
0.46734693877551 0.64821351281165
0.485714285714286 0.632512901390381
0.504081632653061 0.617837802968053
0.522448979591837 0.6040837739393
0.540816326530612 0.591160341276072
0.559183673469388 0.5789887264281
0.577551020408163 0.567500001571389
0.595918367346939 0.556633585175847
0.614285714285714 0.546336006078479
0.63265306122449 0.53655988166558
0.651020408163265 0.527263068020641
0.669387755102041 0.518407949126005
0.687755102040816 0.509960839221832
0.706122448979592 0.501891477801654
0.724489795918367 0.494172600874884
0.742857142857143 0.48677957535561
0.761224489795918 0.479690085966115
0.779591836734694 0.472883866037403
0.797959183673469 0.466342465170726
0.816326530612245 0.460049047986192
0.83469387755102 0.453988219197296
0.853061224489796 0.448145871067195
0.871428571428571 0.442509049964949
0.889795918367347 0.437065839279657
0.908163265306122 0.431805256392127
0.926530612244898 0.426717161766903
0.944897959183673 0.421792178527288
0.963265306122449 0.41702162112454
0.981632653061225 0.412397431919273
1 0.407912124665829
};
\addlegendentry{\Large $\theta$ by relaxed sufficient condition in Lemma~\ref{lemma:SuffCondSPQBinaryRelax}};
\end{axis}

\end{tikzpicture}}}
	\caption{Two values of $\theta$ produced by  Lemma~\ref{lemma:SuffCondSPQBinary}  and Lemma~\ref{lemma:SuffCondSPQBinaryRelax} for achieving $\epsilon$-pufferfish privacy on $\SetPair_{P,Q} = \Set{(P_i,Q_i)}$ on three datasets in the UCI machine learning repository. 
	For \texttt{adult}, $P_i (\cdot) = \Pr(\texttt{relationship} = \cdot | \texttt{race} = \texttt{`Asian-Pac-Islander'})$ and $Q_i (\cdot) = \Pr(\texttt{relationship} = \cdot | \texttt{race} = \texttt{`Amer-Indian-Eskimo'})$; 
	For \texttt{student performance},  $P_i (\cdot) = \Pr(\texttt{romance} = \cdot | \texttt{freetime} = \texttt{5})$ and $Q_i (\cdot) = \Pr(\texttt{romance} = \cdot | \texttt{freetime} = \texttt{2})$; 
	For \texttt{bank marketing}, $P_i (\cdot) = \Pr(\texttt{loan} = \cdot | \texttt{job} = \texttt{`blue-collar'})$ and $Q_i (\cdot) = \Pr(\texttt{loan} = \cdot | \texttt{job} = \texttt{`management'})$. 
	}
	\label{fig:SuffCondUCIPQBinary}
\end{figure*}

The noise reduction by Lemma~\ref{lemma:SuffCondSPQBinaryRelax} in Figure~\ref{fig:SuffCondSPQBinaryRelax} may not seem very large, because the difference between $p_4= 0.2$ and $q_4= 0.9$. We can get better results in real-world applications.  
See Figure~\ref{fig:SuffCondUCIPQBinary}. We extract binary features \texttt{relationship} (\texttt{married}/\texttt{unmarried}), \texttt{romance} and \texttt{loan} in \texttt{adult}, \texttt{student performance} and \texttt{bank marketing} datasets, respectively. It can be seen that the noise amount is significantly reduced in smaller $\epsilon$ range, which indicates better data utility can be achieved for strict data privacy guarantee.

\section{Conclusion}

We studied how to attain individual level indistinguishability using pufferfish privacy framework. We formulated a multi-user system assuming each user reports a realization of a random variable to answer a summation query. Adopting Laplace noise and Kantorovich mechanism, we considered four secret pair sets: $\SetPair_{a,b}$ and $\SetPair_{a,\perp}$ discriminates different realizations of users and whether or not users are present in the system; $\SetPair_{P,\perp}$ and $\SetPair_{P,Q}$ discriminates different probability distributions of users data and whether or not users are present in the system. 
We derived sufficient condition for attaining pufferfish privacy on each secret pair. It is shown that pufferfish privacy guarantees indistinguishability of individual's data conditions on this individual only, i.e., it is independent of the rest users in the system.  
For pufferfish privacy on $\SetPair_{P,\perp}$ and $\SetPair_{P,Q}$, we can utilize the intrinsic statistics to reduce the noise.  
We run experiments to show that pufferfish privacy on $\SetPair_{P,\perp}$ and $\SetPair_{P,Q}$ guarantees indistinguishability when a class of users are modified, added or removed.

While the multi-user system assigns each user a presence probability indicating the chances of his/her participation, the sufficient conditions derived in this paper are independent of it. The question is under what circumstances (i.e., on which secret pairs), the achievability of pufferfish privacy relies on this presence probability. 
On the other hand, it is worth discussing how to extend the results derived in this paper to R\'{e}nyi pufferfish privacy framework and study how the sufficient conditions change with the R\'{e}nyi parameter $\alpha$. 
In addition, it would be interesting to see how to calibrate Gaussian noise, with fast decaying tail probability, for attaining individual level pufferfish privacy.



\vspace{10pt}

{\noindent \Large \textbf{APPENDIX}}

\appendix
\section{Computing $\sup_{(x,x') \in \supp(\pi^*)} |x-x'|$} \label{app1}

To find the maximum pairwise distance in the support of Kantorovich optimal transport plan $\pi^*$, we do not need to determine the probability mass $\pi^*(x,x')$ for each $(x,x') \in \X^2$. The following proposition shows how to calculate it directly from the prior distributions. 
\begin{lemma}  \label{lemma:ComputePiStar}
	For $F_{X|S}(\cdot|s_i,\rho)$ and  $F_{X|S}(\cdot|s_j,\rho)$ being the cumulative distribution of the priors $P_{X|S}(\cdot|s_i,\rho)$ and  $P_{X|S}(\cdot|s_j,\rho)$, respectively, 
		$$ \sup_{(x,x') \in \supp(\pi^*)}  |x-x'| = \sup_{x \in \X} \triangle^*(x) $$
		where 
		\begin{align}
			\triangle^*(&x)  = \nonumber \\
			&\begin{cases}
				0 &  F_{X|S}(x|s_i,\rho) = F_{X|S}(x|s_j,\rho) \\
				\min \Set{\triangle>0 \colon  F_{X|S}(x|s_i,\rho) \leq \\
					\qquad\qquad F_{X|S}(x+\triangle|s_j,\rho)} & F_{X|S}(x|s_i,\rho) > F_{X|S}(x|s_j,\rho) \\
				\min \Set{\triangle>0 \colon  F_{X|S}(x|s_i,\rho) \geq  \\
					\qquad\qquad  F_{X|S}(x-\triangle|s_j,\rho)} & F_{X|S}(x|s_i,\rho) < F_{X|S}(x|s_j,\rho) 
			\end{cases}
		\end{align}
\end{lemma}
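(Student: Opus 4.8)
The plan is to reduce the statement to two elementary facts about the one-dimensional optimal coupling $\pi^*$ defined in \eqref{eq:piStar}: that it is the monotone (comonotone) rearrangement of the two priors, and that $\triangle^*(x)$ measures the horizontal distance between the two cumulative distribution graphs. Write $F_i:=F_{X|S}(\cdot\,|s_i,\rho)$ and $F_j:=F_{X|S}(\cdot\,|s_j,\rho)$. Reading \eqref{eq:piStar} as a mixed second difference of $G(x,x')=\min\{F_i(x),F_j(x')\}$ gives, for every pair $(x,x')$,
$$\pi^*(x,x') = \big[\min\{F_i(x),F_j(x')\} - \max\{F_i(x^-),F_j(x'^-)\}\big]^+,$$
which is exactly the length of the overlap of the two quantile intervals $[F_i(x^-),F_i(x)]$ and $[F_j(x'^-),F_j(x')]$. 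Hence $(x,x')\in\supp(\pi^*)$ precisely when those intervals overlap, and consequently
$$\sup_{(x,x')\in\supp(\pi^*)}|x-x'| = \sup_{u\in(0,1)}\big|F_i^{-1}(u)-F_j^{-1}(u)\big|,$$
the largest horizontal gap between the graphs of $F_i$ and $F_j$, where $F_i^{-1},F_j^{-1}$ are the quantile functions. This is the quantity I will identify with $\sup_{x\in\X}\triangle^*(x)$.

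Next I would interpret $\triangle^*(x)$ as this horizontal gap measured at the height $F_i(x)$. If $F_i(x)=F_j(x)$ the graphs meet at height $F_i(x)$ over the location $x$ and the gap there is $0$, giving the first case. If $F_i(x)>F_j(x)$ the graph of $F_j$ lies below, so from $x$ one must move to the right until $F_j$ rises to the level $F_i(x)$; the least such displacement $\min\{\triangle>0:F_i(x)\le F_j(x+\triangle)\}$ is the second case, and it locates the right-most point of $\supp(\pi^*)$ coupled with $x$. The case $F_i(x)<F_j(x)$ is the mirror image, reaching to the left. I would first establish $\sup_{x}\triangle^*(x)\le\sup_{(x,x')\in\supp(\pi^*)}|x-x'|$ by checking, from the definition of $\triangle^*(x)$ together with the overlap characterization above, that the witnessing pair $(x,\,x\pm\triangle^*(x))$ lies in $\supp(\pi^*)$, or is a limit of such pairs when $x$ is not itself a mass point.

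For the reverse inequality I would run the same reduction backwards. Given any $(x,x')\in\supp(\pi^*)$, positivity of the overlap forces $F_j(x'^-)<F_i(x)$ when $x'>x$ and $F_i(x^-)<F_j(x')$ when $x'<x$; monotonicity of the CDFs then exhibits $|x-x'|$ as a horizontal gap $|F_i^{-1}(u)-F_j^{-1}(u)|$ at the height $u$ realizing that overlap, and each such gap is in turn approached by $\triangle^*(x)$ as $x$ ranges over $\X$. Taking suprema yields $\sup_{(x,x')\in\supp(\pi^*)}|x-x'|\le\sup_{x}\triangle^*(x)$, and the two inequalities give the claimed identity, which is precisely the $\sup_{(x,x')\in\supp(\pi^*)}|x-x'|$ that Theorem~\ref{theo:SuffCondSPQ} feeds into the Kantorovich mechanism of Proposition~\ref{prop:Kantorovich}.

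The main obstacle is the bookkeeping around atoms (jumps of the CDFs). The supremum $\sup_x\triangle^*(x)$ need not be attained at a mass point of $P_{X|S}(\cdot\,|s_i,\rho)$: for two point masses, for instance, the correct value is recovered only as $x$ tends from below to the atom of $F_i$, so the statement is genuinely a supremum over all of $\X$ rather than a maximum over support points. Handling the one-sided limits $F_i(x^-)$ and $F_j(x'^-)$ consistently, keeping every inequality strict exactly where positive coupled mass is required, and verifying that sweeping $x$ across $\X$ (with limits at the jump points) recovers every height $u$ at which the horizontal gap is extremal, is the delicate part of the argument; everything else is monotonicity of one-dimensional distribution functions.
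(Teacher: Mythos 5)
Your plan is correct and ultimately rests on the same object as the paper's proof --- the support of the comonotone (min-copula) coupling --- but it takes a recognizably different route through it. The paper never introduces quantile functions: it expands the mixed second difference $\dif^2 \min\{F_{X|S}(x|s_i,\rho), F_{X|S}(x'|s_j,\rho)\}$ as an alternating sum of four min terms, case-analyzes (according to the sign of $F_{X|S}(x|s_i,\rho) - F_{X|S}(x'|s_j,\rho)$) exactly when that quantity vanishes, and then, for each \emph{fixed} $x$, sweeps $\triangle$ to show that the coupled mass $\pi^*(x,x\pm\triangle)$ is zero precisely outside a one-sided window of width $\triangle^*(x)$; the identity follows by taking the supremum over $x$. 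You instead evaluate the second difference once and for all as the overlap length $[\min\{F_i(x),F_j(x')\} - \max\{F_i(x^-),F_j(x'^-)\}]^+$ of two quantile intervals, pass to the quantile-function picture, and identify both sides of the claimed identity with $\sup_{u}|F_i^{-1}(u)-F_j^{-1}(u)|$. Your version buys a conceptual dividend: it makes explicit that the lemma is the standard ``$W_\infty$ of the monotone rearrangement'' identity, which the paper leaves implicit, and the closed-form overlap formula is tidier than the paper's four-term case analysis. What the paper's more pedestrian route buys is that the atom bookkeeping you rightly flag as the delicate step largely never arises: by fixing $x$ and asking directly for which $\triangle$ the mass $\pi^*(x,x\pm\triangle)$ vanishes, the one-sided limits and the question of whether sweeping $x$ over $\X$ recovers every extremal height $u$ are absorbed into the very definition of $\triangle^*(x)$. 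To make your sketch complete you would still need to discharge that last identification (every extremal horizontal gap is realized or approached by some $\triangle^*(x)$, and conversely each witnessing pair $(x, x\pm\triangle^*(x))$ carries positive mass or is a limit of such pairs); that is precisely the content of the three-case enumeration at the end of the paper's proof, so the two arguments meet there.
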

\begin{proof}
	For $(x,x') \in \X^2$, the joint cumulative distribution of the Kantorovich optimal transport plan $\pi^*$ is  
	$$\pi^*((-\infty,x], (-\infty,x']) = \min\Set{F_{X|S}(x|s_i,\rho) , F_{X|S}(x'|s_j,\rho) }.$$ 
	For small $\dif x, \dif  x' > 0 $, consider the numerator of 
	\begin{equation} \label{eq:PiStarCompute1}
		 \pi^*(x,x') = \frac{\dif^2}{\dif x  \dif x'} \pi^*((-\infty,x], (-\infty,x']). 
	\end{equation}
	We have 
	\begin{align}
		&\dif^2 \pi^*((-\infty, x],(-\infty, x']) \nonumber \\
		& =\pi^*((-\infty, x + \dif x], (-\infty, x' + \dif x']) - \pi^*((-\infty, x + \dif {x}], (-\infty, x']) \nonumber \\
		& \quad - \pi^*((-\infty, x], (-\infty, x' + \dif x']) +  \pi^*((-\infty, x], (-\infty, x'])  \nonumber \\
		& =  \min\Set{F_{X|S}(x + \dif {x}|s_i,\rho) , F_{X|S}(x' + \dif {x'}|s_j,\rho ) } \nonumber \\
		& \qquad\qquad - \min\Set{F_{X|S}(x + \dif {x}|s_i,\rho) , F_{X|S}(x'|s_j),\rho } \nonumber  \\
		& \qquad\qquad - \min\Set{F_{X|S}(x|s_i,\rho) , F_{X|S}(x' + \dif {x'}|s_j,\rho) } \nonumber \\
		&\qquad\qquad +  \min\Set{F_{X|S}(x|s_i,\rho) , F_{X|S}(x'|s_j,\rho) }.  \label{eq:PiStarCompute2}
	\end{align}
	It is clear that to know $\supp(\pi^*)$, we just need to judge whether or not $\dif^2 \pi^*((-\infty, x],(-\infty, x']) = 0$. We consider the following cases. 	
	\begin{itemize}
		\item 	If $F_{X|S}(x|s_,\rho) = F_{X|S}(x'|s_j,\rho)$, then 
					\begin{multline*}
						\min\Set{F_{X|S}(x|s_i,\rho) , F_{X|S}(x' + \dif {x'}|s_j,\rho) } = \\
						\min\Set{F_{X|S}(x|s_i,\rho) , F_{X|S}(x'|s_j,\rho) } 
					\end{multline*}
					and so 
					\begin{align*}
						\dif^2 \pi^*((-\infty, x],&(-\infty, x']) \\
						& =  \min\Set{F_{X|S}(x + \dif {x}|s_i,\rho) , F_{X|S}(x' + \dif {x'}|s_j,rho) } \\
						& \qquad - \min\Set{F_{X|S}(x + \dif {x}|s_i,\rho) , F_{X|S}(x'|s_j,\rho)} \geq 0.
					\end{align*}
		\item 	If  $F_{X|S}(x|s_i,\rho) > F_{X|S}(x'|s_j,\rho)$, then 
					\begin{multline*}
						\min\Set{F_{X|S}(x + \dif {x}|s_i,\rho) , F_{X|S}(x'|s_j,\rho) }  = \\
						\min\Set{F_{X|S}(x|s_i,\rho) , F_{X|S}(x'|s_j,\rho) }
					\end{multline*}
					and 
					\begin{align*}
						\dif^2 & \pi^*((-\infty, x],(-\infty, x']) \\
						&= \min\Set{F_{X|S}(x + \dif {x}|s_i,\rho) , F_{X|S}(x' + \dif {x'}|s_j,\rho) } - \\
						&\qquad\qquad   \min\Set{F_{X|S}(x|s_i,\rho) , F_{X|S}(x' + \dif {x'}|s_j,\rho) } \\
						&=
						\begin{cases}
							0	& F_{X|S}(x|s_i,\rho) \geq  F_{X|S}(x' + \dif {x'} |s_j,\rho) \\
							\min\Set{F_{X|S}(x + \dif {x}|s_i,\rho) , \\ 
								\quad F_{X|S}(x' + \dif {x'}|s_j,\rho) }  \\
							\quad - F_{X|S}(x|s_i,\rho)  \geq 0 & F_{X|S}(x|s_i,\rho) <  F_{X|S}(x' + \dif {x'} |s_j,\rho) 
						\end{cases}
					\end{align*}
		\item If  $F_{X|S}(x|s_i,\rho) < F_{X|S}(x'|s_j,\rho)$, then 
					\begin{multline*}
						\min\Set{F_{X|S}(x|s_i,\rho) , F_{X|S}(x' + \dif {x'}|s_j,\rho) } = \\
						\min\Set{F_{X|S}(x|s_i,\rho) , F_{X|S}(x'|s_j,\rho) }
					\end{multline*}
					and 
					\begin{align*}
						& \dif^2 \pi^*((-\infty, x],(-\infty, x']) \\
						& =  \min\Set{F_{X|S}(x + \dif {x}|s_i,\rho) , F_{X|S}(x' + \dif {x'}|s_j,\rho)} \\
						& \qquad - \min\Set{F_{X|S}(x + \dif {x}|s_i,\rho) , F_{X|S}(x'|s_j,\rho)} \\
						&= 
						\begin{cases}
							0 & F_{X|S}(x+\dif x|s_i,\rho) \leq F_{X|S}(x'|s_j,\rho)\\
							\min\Set{F_{X|S}(x + \dif {x}|s_i,\rho) , \\
								\quad F_{X|S}(x' + \dif {x'}|s_j,\rho)} \\
							\quad  - F_{X|S}(x'|s_i,\rho) \geq 0 & F_{X|S}(x+\dif x|s_i,\rho) > F_{X|S}(x'|s_j,\rho)
						\end{cases}
					\end{align*}	
	\end{itemize}
	We then characterize $\supp(\pi^*)$ by considering nonzero mass in terms of the pairwise distance as follows. Let $x = x'$, consider three cases below. 		
	\begin{enumerate}
		\item When $F_{X|S}(x|s_i,\rho) = F_{X|S}(x|s_j,\rho)$, then $\pi^*(x,x) \geq 0$. But, for all $\triangle > 0$, $F_{X|S}(x|s_i,\rho) \leq F_{X|S}(x+\triangle|s_j,\rho)$and $F_{X|S}(x|s_i,\rho) \geq F_{X|S}(x-\triangle|s_j,\rho)$ and therefore $\pi^*(x,x+\triangle) = 0$ and $\pi^*(x,x-\triangle) = 0$. 
		\item When $F_{X|S}(x|s_i,\rho) > F_{X|S}(x|s_j,\rho)$, define $\triangle^*(x) = \min \Set{\triangle>0 \colon  F_{X|S}(x|s_i,\rho) \leq  F_{X|S}(x+\triangle|s_j,\rho)}$. Then, $F_{X|S}(x|s_i,\rho) \leq F_{X|S}(x+\triangle|s_j,\rho)$ for all $\triangle > \triangle^*(x)$ and $F_{X|S}(x|s_i,\rho) > F_{X|S}(x-\triangle|s_j,\rho)$ for all $\triangle>0$. That is, $\pi^*(x,x+\triangle) = 0$ for all $\triangle > \triangle^*(x)$ and $\pi^*(x,x-\triangle) = 0$ for all $\triangle>0$. 
		\item When $F_{X|S}(x|s_i,\rho) < F_{X|S}(x|s_j,\rho)$, define $\triangle^*(x) = \min \Set{\triangle>0 \colon  F_{X|S}(x|s_i,\rho) \geq  F_{X|S}(x-\triangle|s_j,\rho)}$. Then, $F_{X|S}(x|s_i,\rho) < F_{X|S}(x+\triangle|s_j,\rho)$ for all $\triangle > 0$ and $F_{X|S}(x|s_i,\rho) \geq F_{X|S}(x-\triangle|s_j,\rho)$ for all $\triangle>\triangle^*(x)$. That is, $\pi^*(x,x+\triangle) = 0$ for all $\triangle > 0$ and $\pi^*(x,x-\triangle) = 0$ for all $\triangle > \triangle^*(x)$. 
	\end{enumerate}	
	Therefore, for case~(1), $\sup_{x' \in \X}  |x-x'| = 0$, for cases~(2) and (3), $\sup_{x' \in \X}  |x-x'| = \triangle^*(x)$. This proves the proposition. 
\end{proof}

Recall \eqref{eq:piStar}, that $\pi^*(x,x') = \frac{\dif^2}{\dif x  \dif x'} \min \big\{ F_{X|S}(x|s_i,\rho), F_{X|S}(x'|s_j,\rho) \big\}$. 
An intuitive interpretation of Lemma~\ref{lemma:ComputePiStar} is that the Kantorovich optimal transport plan will assign nonzero mass to $(x,x')$ where the minimum changes from $F_{X|S}(x|s_i,\rho),$ to $F_{X|S}(x|s_j,\rho),$ or vice versa. With Lemma~\ref{lemma:ComputePiStar}, it is clear that we can revise the Kantorovich mechanism in Proposition~\ref{prop:Kantorovich} as follows. 
\begin{proposition}  \label{prop:ComputePiStar}
	Adding Laplace noise $N_\theta$ with 	
	\begin{equation}\label{eq:w_1_mechanism_alt}
		\theta = \frac{1}{\epsilon}\max_{\rho, (s_i, s_j) \in \mathbb{S}} \sup_{x \in \X} \triangle^*(x)
	\end{equation}
	attains $(\epsilon,\SetPair)$-pufferfish privacy in $Y$. \qed
\end{proposition}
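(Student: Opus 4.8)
The plan is to derive this directly from the $W_1$ (Kantorovich) mechanism in Proposition~\ref{prop:Kantorovich} by substituting the closed-form characterization of the support of $\pi^*$ established in Lemma~\ref{lemma:ComputePiStar}. Proposition~\ref{prop:Kantorovich} already certifies that the choice \eqref{eq:w_1_mechanism},
$$\theta = \frac{1}{\epsilon}\max_{\rho, (s_i, s_j) \in \mathbb{S}} \sup_{(x,x') \in \supp(\pi^*)} |x-x'|,$$
attains $(\epsilon,\SetPair)$-pufferfish privacy, so it suffices to show that this value of $\theta$ coincides with the one in \eqref{eq:w_1_mechanism_alt}. No fresh privacy argument is needed; the work is purely to rewrite the defining quantity.

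First I would fix an arbitrary secret pair $(s_i, s_j) \in \mathbb{S}$ and prior belief $\rho$, and recall that Lemma~\ref{lemma:ComputePiStar}, applied to the priors $P_{X|S}(\cdot|s_i,\rho)$ and $P_{X|S}(\cdot|s_j,\rho)$ with cumulative distributions $F_{X|S}(\cdot|s_i,\rho)$ and $F_{X|S}(\cdot|s_j,\rho)$, yields the pointwise identity
$$\sup_{(x,x') \in \supp(\pi^*)} |x-x'| = \sup_{x \in \X} \triangle^*(x),$$
where $\triangle^*$ is the secret-pair- and prior-dependent quantity defined there. This is exactly the content we need: for each fixed instance of the outer maximization, the two inner suprema agree. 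Then I would take the maximum over all $(s_i,s_j) \in \mathbb{S}$ and $\rho$ on both sides; since the equality holds for every fixed $(s_i,s_j)$ and $\rho$, it is preserved under the outer $\max$, giving
$$\max_{\rho, (s_i, s_j) \in \mathbb{S}} \sup_{(x,x') \in \supp(\pi^*)} |x-x'| = \max_{\rho, (s_i, s_j) \in \mathbb{S}} \sup_{x \in \X} \triangle^*(x).$$
Dividing by $\epsilon$ shows that the $\theta$ of \eqref{eq:w_1_mechanism_alt} equals the $\theta$ of \eqref{eq:w_1_mechanism}, and the pufferfish privacy guarantee is inherited verbatim from Proposition~\ref{prop:Kantorovich}.

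The main (and essentially only) subtlety is bookkeeping: one must keep track that $\triangle^*$ in Lemma~\ref{lemma:ComputePiStar} is itself indexed by the pair $(s_i,s_j)$ and $\rho$ through the cumulative distributions $F_{X|S}(\cdot|s_i,\rho)$ and $F_{X|S}(\cdot|s_j,\rho)$, so that interchanging the Lemma's identity with the outer maximization is legitimate rather than a change of the quantity being optimized. There is no genuine analytic obstacle, because all the real work---characterizing $\supp(\pi^*)$ via the points where the running minimum of the two cumulative distributions switches from one curve to the other---has already been carried out in the proof of Lemma~\ref{lemma:ComputePiStar}. Accordingly, the proof reduces to invoking that Lemma and Proposition~\ref{prop:Kantorovich} in sequence.
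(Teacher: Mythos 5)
Your proposal is correct and matches the paper's intended argument exactly: the paper states Proposition~\ref{prop:ComputePiStar} as an immediate consequence of Lemma~\ref{lemma:ComputePiStar} (which gives $\sup_{(x,x') \in \supp(\pi^*)} |x-x'| = \sup_{x \in \X} \triangle^*(x)$ for each fixed secret pair and prior) substituted into the guarantee of Proposition~\ref{prop:Kantorovich}. Your careful remark about $\triangle^*$ being indexed by $(s_i,s_j)$ and $\rho$ before commuting with the outer maximization is the only bookkeeping point, and you handle it correctly.
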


\begin{example}
	Consider two prior distributions in \autoref{tab:ex2P}. We have the corresponding cumulative distributions in \autoref{tab:ex2F}. 
	\begin{table}[t]
		\caption{Examples of $P_{X|S}$} \label{tab:ex2P}
		\begin{center}
			\begin{tabular}{cccccc}
				\hline\hline
				& $X = 1$ & $X=2$ & $X=3$ & $X=4$ & $X=5$\\ \hline
				$P_{X|S}(\cdot|s_i,\rho)$ & 0.2  & 0.225 & 0.5 & 0.075 & 0 \\ \hline
				$P_{X|S}(\cdot|s_j,\rho)$ & 0  & 0.075 & 0.5 & 0.225 & 0.2 \\  \hline
			\end{tabular}
		\end{center}
	\end{table}
		\begin{table}[t]
		\caption{Cumulative Probability Distribution of Table~\ref{tab:ex2P}} \label{tab:ex2F}
		\begin{center}
			\begin{tabular}{cccccc}
				\hline\hline
				& $X = 1$ & $X=2$ & $X=3$ & $X=4$ & $X=5$\\ \hline
				$F_{X|S}(\cdot|s_i,\rho)$ & 0.2  & 0.42 & 0.925 & 1 & 1 \\ \hline
				$F_{X|S}(\cdot|s_j,\rho)$ & 0  & 0.075 & 0.575 & 0.925 & 1 \\  \hline
			\end{tabular}
		\end{center}
	\end{table}
	We apply Lemma~\ref{lemma:ComputePiStar} using \autoref{tab:ex2F}. For $X = 1$, as $F_{X|S}(1|s_i,\rho) > F_{X|S}(1|s_j,\rho)$, we have
	$\triangle^*(1) = 	\min \Set{\triangle>0 \colon  F_{X|S}(1|s_i,\rho) \leq F_{X|S}(1+\triangle|s_j,\rho)} = 2$. In the same way, we can work out that $\triangle^*(2) = \triangle^*(3) = \triangle^*(4)$ and $\triangle^*(5) = 0$. So, $\sup_{x \in \X} \triangle^*(x) = 2$. We also compute the Kantorovich optimal plan $\pi^*$ in~\autoref{tab:ex2PiStar}. It can be seen that $\sup_{(x,x') \in \supp(\pi^*)}  |x-x'| = \sup_{x \in \X} \triangle^*(x) = 2$. 
	    \begin{table}[t]
		\caption{Kantorovich Optimal Transport Plan $\pi^*(x,x')$} \label{tab:ex2PiStar}
		\begin{center}
			\begin{tabular}{cccccc}
				\hline\hline
				& $X' = 1$ & $X'=2$ & $X'=3$ & $X'=4$ & $X'=5$\\ \hline
				$X= 1$ & 0  & 0.075 & 0.125 & 0 & 0\\ \hline
				$X= 2$ & 0  & 0 & 0.225 & 0 & 0 \\  \hline
				$X= 3$ & 0  & 0 & 0.15 & 0.225 & 0.125 \\  \hline
				$X= 4$ & 0  & 0 & 0 & 0 & 0.075 \\  \hline
				$X= 5$ & 0  & 0 & 0 & 0 & 0 \\  \hline
			\end{tabular}
		\end{center}
	\end{table}
\end{example}

\subsection{Interpretation of  \autoref{theo:SuffCondSab} and \autoref{theo:SuffCondSaperp}} \label{app:ex}
We prove \autoref{theo:SuffCondSab} and \autoref{theo:SuffCondSaperp} by Proposition~\ref{prop:ComputePiStar} based on the Kantorovich approach in Proposition~\ref{prop:Kantorovich}. 
For prior $P_{X|S}(\cdot | s_{a_i})$, the cumulative distribution function is 
\begin{align*}
	F_{X|S}(x|x_{a_i}) 
	&= \int_{-\infty}^{x} \Pr(f(D_{-i}) = m - a_i)  \prod_{j \in \N \colon j \neq i} \zeta_j  \dif m \\
	&= F (f(D_{-i}) = x - a_i) \prod_{j \in \N \colon j \neq i} \zeta_j, 
\end{align*}
where $F (f(D_{-i}) = x) = \int_{-\infty}^{x} \Pr(f(D_{-i}) = m) \dif m $. Then, 
\begin{align*}
	F_{X|S}&(x|x_{a_i})  - F_{X|S}(x - \triangle |x_{b_i}) \\
	& = \big( F (f(D_{-i}) = x - a_i)  - F (f(D_{-i}) = x - \triangle - b_i)     \big) \prod_{j \in \N \colon j \neq i} \zeta_j \\
	& > 0, \qquad \forall \triangle > a_i - b_i, \\
	F_{X|S}&(x|x_{a_i})  - F_{X|S}(x + \triangle |x_{b_i}) \\
	& = \big( F (f(D_{-i}) = x - a_i)  - F (f(D_{-i}) = x + \triangle - b_i)     \big) \prod_{j \in \N \colon j \neq i} \zeta_j \\
	& < 0, \qquad \forall \triangle > b_i - a_i, \\
\end{align*}
So, $\sup_{(x,x') \in \supp(\pi^*)}  |x-x'| = \sup_{x \in \X} \triangle^*(x)  = |a_i-b_i|$ for each pair of secrets $(s_{a_i}, s_{b_i})$ and therefore $\theta = \frac{\max_{i \in \N, \rho} | a_i - b_i |}{\epsilon}$ and \autoref{theo:SuffCondSab} proves. 
By Remark~\ref{rem:SabSaperp}, $\sup_{(x,x') \in \supp(\pi^*)}  |x-x'| = \sup_{x \in \X} \triangle^*(x)  = |a_i|$ for each pair of secrets $(s_{a_i}, s_{\perp_i})$ and therefore $\theta = \frac{\max_{i \in \N, \rho} | a_i|}{\epsilon}$ and \autoref{theo:SuffCondSaperp} proves.

\bibliographystyle{ACM-Reference-Format}
\bibliography{BIB}

\end{document}